\documentclass[aps,prx,reprint,onecolumn,superscriptaddress]{revtex4-2}
\usepackage[T1]{fontenc}
\usepackage[utf8]{inputenc}
\usepackage[a4paper, margin=2cm]{geometry}
\usepackage{physics}
\usepackage{amsthm}
\usepackage{amssymb}
\usepackage{bbold}
\usepackage{mathtools}
\usepackage{graphicx}
\usepackage{booktabs}
\usepackage{multirow}
\usepackage{caption}
\usepackage{subfig}
\usepackage[colorlinks=true, linkcolor=blue, citecolor=green]{hyperref}
\usepackage{natbib}
\usepackage{tikz}

\DeclareMathOperator{\diag}{diag}

\theoremstyle{plain}

\newtheorem{lemma}{Lemma}

\newcommand{\Hp}{\mathcal{H}_{\text{phys}}}

\newcommand{\G}{\mathcal{G}}
\newcommand{\C}{\mathcal{C}}
\newcommand{\m}{\!-\!}
\newcommand{\p}{\!+\!}

\renewcommand{\abs}[2][]{#1|#2#1|}
\renewcommand{\ket}[2][]{#1|#2#1\rangle}

\newcommand{\Ualpha}[2]{X_{(\alpha,#1),(A,#1)}^{(\alpha,#2)}
X_{(\alpha,#1),(B,#1)}^{(\alpha,#2)}
X_{(A,#1),(B,#1)}^{(\alpha,#2)}}

\begin{abstract}
Gauge theories and quantum error-correcting codes share the same underlying structure: both use constraints to identify a specific subspace of the full Hilbert space. In quantum error correction, these constraints are known as stabilizers, while in gauge theories they correspond to Gauss law.
In this work, we consider a family of discrete Abelian lattice gauge theories described by a $\mathbb{Z}_{N}$ gauge group with $N$ an arbitrary power of two.
In this setting, we find a set of stabilizers for the gauge-invariant subspace which is an alternative to the Gauss operators, and we call them binary Gauss stabilizers.
We use this alternative stabilizer group to build practical error-correcting codes exploiting the gauge symmetries of the system without the addition of extra qubits.
The applications of our finding are not limited to error correction though.
We also provide a new strategy of gauge fixing to remove the redundancies based on our alternative stabilizer, which might provide advantages with respect to already-existing approaches such as the axial gauge. Our results provide new tools to study lattice gauge theories and their quantum simulation, and opens directions for future work at the interface of lattice gauge theory and quantum information.
\end{abstract}

\begin{document}

\title{Binary Gauss Stabilizers for Abelian Lattice Gauge Theories}

\newcommand{\UNITN}{{Dipartimento di Fisica, University of Trento, via Sommarive 14, I–38123, Povo, Trento, Italy}}
\newcommand{\TIFPA}{INFN-TIFPA Trento Institute of Fundamental Physics and Applications,  Trento, Italy}

\author{Matteo~Turco}
\email{matteo.turco@tecnico.ulisboa.pt}
\affiliation{Instituto Superior Técnico, Universidade de Lisboa, Lisbon, Portugal}
\affiliation{PQI -- Portuguese Quantum Institute, Lisbon, Portugal}
\affiliation{Physics of Information and Quantum Technologies Group, Centro de Física e Engenharia de Materiais Avançados (CeFEMA), Lisbon, Portugal}
\affiliation{Laboratory of Physics for Materials and Emergent Technologies, Lisbon, Portugal}
\author{Luca~Spagnoli}
\email{luca.spagnoli@unitn.it}
\affiliation{\UNITN}
\affiliation{\TIFPA}
\author{Alessandro~Roggero}
\email{a.roggero@unitn.it}
\affiliation{\UNITN}
\affiliation{\TIFPA}

\maketitle

\section{Introduction}
In the last decades, the growing interest in connections between quantum computation and quantum information on one side and fundamental physics on the other side has been driven by the hope that quantum computers will eventually allow us to simulate from first principles the Standard Model of particle physics in previously inaccessible regimes.
Alongside this extraordinary goal, there have been plenty of works devoted to exploring connections between the two fields on a more theoretical level, or to porting mathematical tools and techniques from one field to the other, in search of new insights~\cite{Araujo:2019mni,Kowalska:2024kbs,Cao:2026mza,Brun:2020ztg,Sellapillay:2020wpf,Sellapillay:2021xai,Liu:2021tef,Siew-Chandrasekharan-Bhattacharya_2026-03}.

As already made manifest by many works, a particularly strong bridge between the two fields is provided by quantum error correction and gauge theories, two key aspects of the two fields.
Gauge theories seem to be unavoidable to describe fundamental models of nature in a $3\p1$-dimensional Minkowski spacetime, and many efforts have been made in order to design quantum algorithms for their digital quantum simulation (for recent reviews see~\cite{Ba_uls_2020, Zohar_2021, Klco_2022, Bauer_PRX_2023,PRXQuantum.5.037001,Bauer-Davoudi-Klco-Martin_2023-06,davoudi2026}).
Quantum error correction plays a crucial role in scaling up computations on quantum platforms due to the presence of noise that, if not suppressed, would void most of the potential advantage promised by quantum computation \cite{ShorQEC, Gottesman1997, Gottesman1998, Knill_qec_98, Terhal2015, Roffe2019}.
The common element of the two topics is the presence of redundancies: in both cases, the Hilbert space of interest is just a subspace of a larger Hilbert space, and it is identified by an extensive number of constraints.
These constraints can be expressed as with the requirement that valid states be simultaneous eigenstates with eigenvalue $+1$ of operators in a given group.
Therefore, both fields make use of stabilizers \cite{Gottesman1997}, although this terminology is not commonly employed in works on gauge theories.
In this field, a common expression is Gauss constraints, and the operators in the stabilizer group are sometimes called Gauss operators.
An important contribution of this manuscript is to further elucidate, with concrete examples, that studying gauge theories and error correction within the same framework yields important practical consequences.

Since the first application of stabilizers is quantum error correction, it is natural to ask whether we can use the constraints of a lattice gauge theory to do error detection/correction and indeed this line of questions has led to several results in the recent past~\cite{PhysRevA.99.042301,rajput2023quantum,Spagnoli-Roggero-Wiebe_2024-05,carrozza2024correspondence,Ballini2025symmetry,Bao2025,Spagnoli:2026qni,Lacambra-ChatwinDavies-Honda-Hoehn_2026-04,Rothlin-Ferradini-Chen_2026-04,Yao_2025-11,pato2026tradeoffsgaussslawerror,Bradshaw_2026-03}.
This question is the original motivation for the present work, and indeed we provide a positive answer for $\mathbb{Z}_N$ lattice gauge theories in one and two spatial dimensions, when $N=2^{\eta}$ is an arbitrary power of two.
In previous works, the relation between error correction and gauge constraints has been studied and understood in the case of $\mathbb{Z}_2$ with or without matter, as well as for $\mathbb{Z}_N$ with $N$ a prime number, and a heavily truncated version of pure-gauge $SU(2)$.
The key point that allowed us to find the result on $\mathbb{Z}_{2^{\eta}}$ is that given a subspace in a larger Hilbert space, there exist many choices of stabilizer groups for it.
This means that in general we are not bound to the stabilizer group that a gauge theory naturally comes with, namely the one generated by the Gauss operators.
Apart from a few known cases in which the Gauss constraints naturally take a particularly suitable form, such as in $\mathbb{Z}_2$ gauge theories, practical use of the Gauss operators to do error correction is far from immediate.
The main contribution of this work is to provide a new set of stabilizers for the gauge-invariant subspace of the theories we considered.
We further show how it is possible to exploit this new stabilizer group to build error correcting codes without the need to add extra qubits.
Therefore, this work also introduces a new viable strategy to develop practical error correcting codes that exploit the symmetries of the physical system.

Error correction is not the only application of our new stabilizers; we also discuss gauge fixing in detail.
Gauge fixing is the process of removing redundancies in a gauge theory to leave only the physical degrees of freedom, and the choice of gauge fixing can significantly impact the practical implementations of the model.
In some cases it is used to reduce the number of qubits required in a quantum simulation~\cite{Farrell-Chernyshev-Powell-Zemlevskiy-Illa-Savage_22-07,Farrell-Illa-Ciavarella-Savage_23-08}.
In other cases it is used as a means to truncate the Hilbert space preserving gauge invariance~\cite{DAndrea:2023qnr,Grabowska:2024emw}.
Here, we present a way to fix the gauge based on the new stabilizers, with the hope that it may offer advantages in existing applications.
In the conclusions, we discuss other possible future directions based on our work.

As already mentioned, the theories we consider in this work have gauge group $\mathbb{Z}_{2^{\eta}}$, and include dynamical staggered fermionic matter on the sites, even though we focus only on the zero-charge sector, and take periodic boundary conditions on hypercubic spatial lattices.
We start by presenting our results in detail in one dimension, which is the simplest case and more convenient for introducing the ideas.
We then proceed to extend most of the results to the two-dimensional case, although in some parts we keep the discussion more concise.
We expect our results to generalize straightforwardly to higher dimensions, different boundary conditions or nonzero-charge sectors, even though we leave explicit verification to future work.

The organization of the manuscript is graphically represented in Figure~\ref{fig:organization_paper}.
In Section~\ref{sec:LGT} we briefly review essential features of $\mathbb{Z}_N$ lattice gauge theories, including mapping to qubits, to fix conventions and notation.
In Section~\ref{sec:binary-stabilizer} we present the main novelty of this manuscript: our alternative stabilizers for the zero-charge sector of the gauge theories.
This Section is split into Subsection~\ref{subsec:binary-stabilizer-1d}, dedicated to the one-dimensional case, and Subsection~\ref{subsec:binary-stabilizer-2d}, dedicated to the two-dimensional case.
The complete derivation of the alternative stabilizer group in two dimensions is detailed in Appendix~\ref{app:binary-stabilizer-2d-generic-eta}.
The following two Sections are dedicated to the applications of our stabilizer group, quantum error correction in Section~\ref{sec:QEC}, and gauge fixing in Section~\ref{sec:gauge-fixing}.
The Section on quantum error correction is divided into three Subsections, dedicated respectively to $\mathbb{Z}_4$ and $\mathbb{Z}_{2^\eta}$ with $\eta>2$ in one dimension, and the  two-dimensional case for $\mathbb{Z}_4$.
In Section~\ref{sec:conclusions} we give our conclusions and perspectives.
The Appendixes~\ref{app:number_physical_states} and~\ref{app:lemma1} further support the main text with some technicalities.

\begin{figure}[t]
    \centering
    \includegraphics[width=.5\linewidth]{Organization_paper_v2.png}
    \caption{Organization of the paper.}
    \label{fig:organization_paper}
\end{figure}

\section{Notions of $\mathbb{Z}_N$ lattice gauge theories}
\label{sec:LGT}
We consider a gauge theory with Abelian group $\G=\mathbb{Z}_N$ on a hypercubic lattice in $d$ dimensions with fermionic matter and periodic boundary conditions.
The number of sites is $V=L^d$, and the number of links is $E=dV$.
As our purpose is to map this theory onto a qubit system, we take $N=2^{\eta}$ for some integer $\eta\ge1$.
Each site is labeled by $d$ integers $\mathbf{s}=(s_1,\dots,s_d)$, and each link is identified by a pair $(\mathbf{s},l)$, with $l=1,\dots,d$.
We denote the unit vector in the positive direction $l$ as $\hat{\mathbf{l}}$ so that $\hat{\mathbf{1}}=(1,0,0,\dots)$, $\hat{\mathbf{2}}=(0,1,0,\dots)$, and so on.

We place a fermionic mode on each site, $\psi_{\mathbf{s}}$, satisfying the canonical anticommutation relations
\begin{equation}
\{\psi_{\mathbf{s}},\psi_{\mathbf{t}}^{\dagger}\}=\delta_{\mathbf{s}\mathbf{t}},
\qquad
\{\psi_{\mathbf{s}},\psi_{\mathbf{t}}\}=0.
\end{equation}
As local basis, we choose the eigenbasis $\ket{f}_{\mathbf{s}}$ of the number operator
$\psi_{\mathbf{s}}^{\dagger}\psi_{\mathbf{s}}$,
\begin{equation}
\psi_{\mathbf{s}}^{\dagger}\psi_{\mathbf{s}}\ket{f}_{\mathbf{s}}=f\ket{f}_{\mathbf{s}},
\quad f=0,1.
\end{equation}
On every link, we consider the local basis to be the so called electric basis $\ket{m}_{\mathbf{s},l}$.
This is the basis of the irreps of $\mathbb{Z}_N$, which we label by an integer $m=0,\dots,N\m1$.
In this basis, we can define the action of the gauge operators as
\begin{equation}
P_{\mathbf{s},l}\ket{m}_{\mathbf{s},l}=e^{i\frac{2\pi}{N}m}\ket{m}_{\mathbf{s},l},
\qquad
Q_{\mathbf{s},l}\ket{m}_{\mathbf{s},l}=\ket{(m\m1) \mod N}_{\mathbf{s},l}.
\end{equation}
These operators satisfy the $\mathbb{Z}_N$ algebra,
\begin{equation}
P_{\mathbf{s},l}^N=Q_{\mathbf{s},l}^N=1,
\qquad
P_{\mathbf{s},l}^{\dagger}Q_{\mathbf{s},l}P_{\mathbf{s},l}=e^{i\frac{2\pi}{N}}Q_{\mathbf{s},l}.
\end{equation}

The total Hilbert space $\mathcal{H}$ of the system is the tensor product of all the site and link Hilbert spaces.
The physical Hilbert space $\Hp$ is only the subspace identified by the Gauss constraints.
We can define the Gauss constraints using the terminology of the stabilizer formalism.
Specifically, we say that the physical Hilbert space is the subspace stabilized by the Gauss operators
\begin{equation}
G_{\mathbf{s}}=q_{\mathbf{s}}^{\dagger}
\prod_{l=1}^dP_{\mathbf{s},l}^{\dagger}P_{\mathbf{s}-\hat{\mathbf{l}},l},
\label{eq:Gauss_operators}
\end{equation}
where, adopting the staggered-fermion formulation,
\begin{equation}
q_{\mathbf{s}}=e^{-i\frac{2\pi}{N}n_{\mathbf{s}}},
\qquad
n_{\mathbf{s}}=\psi_{\mathbf{s}}^{\dagger}\psi_{\mathbf{s}}
-\frac{1}{2}\big[1-(-1)^{\abs{\mathbf{s}}}\big],
\qquad
\abs{\mathbf{s}}=\sum_{l=1}^d\abs{s_l}.
\end{equation}
More explicitly, this means
\begin{equation}
\Hp=
\big\{\ket{\psi}\in\mathcal{H}\,:\,
G_{\mathbf{s}}\ket{\psi}=\ket{\psi},\,\forall\mathbf{s}\big\}.
\label{zero-charge-sector}
\end{equation}
With this definition, we implicitly choose the zero-charge sector.

For later use, we calculate the number $D$ of states in $\Hp$ using the master formula derived in~\cite{Mariani_2025-09}.
We rewrite it here for square lattices with staggered fermions:
\begin{equation}
D=\sum_{\C}\left(\frac{\abs{\G}}{\abs{\C}}\right)^{E-V}\left[\chi_{\text{even}}(\C)\chi_{\text{odd}}(\C)\right]^{V/2},
\end{equation}
where the sum is over the conjugacy classes $\C$ of $\G$, and $\chi_{\text{even/odd}}(\C)$ is the character of the representation of $\C$ on even/odd sites.
Being $\mathbb{Z}_N$ an Abelian group, each element is a conjugacy class, $\abs{\G}=N$, and $\abs{\C}=1$.
Labeling the conjugacy classes by $c=0,\dots,N-1$, we have
\begin{equation}
\begin{aligned}
\chi_{\text{even}}(c)&=\tr(q_{\mathbf{s}}^c)=1+e^{-i\frac{2\pi c}{N}}
&\qquad&\text{if $\abs{\mathbf{s}}$ even,}\\
\chi_{\text{odd}}(c)&=\tr(q_{\mathbf{s}}^c)=1+e^{i\frac{2\pi c}{N}}
&\qquad&\text{if $\abs{\mathbf{s}}$ odd,}
\end{aligned}
\end{equation}
and therefore we obtain
\begin{equation}
D=N^{E-V}\sum_{c=0}^{N-1}\left[
2+2\cos\left(\frac{2\pi c}{N}\right)
\right]^{V/2}.
\label{number-of-physical-states}
\end{equation}
This expression can be simplified into:
\begin{equation}
    D=N^{E-V+1}\sum_{r\in \mathbb{Z}} \binom{V}{V/2 + rN}\;.
\end{equation}
We provide the full derivation of this expression in Appendix~\ref{app:number_physical_states}. If $N=2$, this formula gives $D=2^{E}$. For $N>2$, there seems to be no case in which $D$ is a power of two, while we provide in the Appendix a formal proof that $D$ is not a power of two if $N>V/2$.
This will be important in the next sections, since $D$ being a power of $2$ is closely related to the physical Hilbert space being stabilized by Pauli operators.

At this point we can introduce the Hamiltonian as follows
\begin{align}
\label{eq:hamiltonian}
H&=H_{\text{mat}}+H_{\text{hop}}+H_{\text{el}}+H_{\text{pl}},\\
H_{\text{mat}}&=m_0\sum_{\mathbf{s}}(-1)^{\abs{\mathbf{s}}}\psi_{\mathbf{s}}^{\dagger}\psi_{\mathbf{s}},\\
H_{\text{hop}}&=\sum_{\mathbf{s}}\sum_{l=1}^d
\left(
\psi_{\mathbf{s}+\hat{\mathbf{l}}}^{\dagger}Q_{\mathbf{s},l}\psi_{\mathbf{s}}+
\psi_{\mathbf{s}}^{\dagger}Q_{\mathbf{s},l}^{\dagger}\psi_{\mathbf{s}+\hat{\mathbf{l}}}
\right),\\
H_{\text{el}}&=g_0\sum_{\mathbf{s}}\sum_{l=1}^d
\left(P_{\mathbf{s},l}+P_{\mathbf{s},l}^{\dagger}\right),\\
H_{\text{pl}}&=\lambda_{\text{pl}}\sum_p
\left(W_p+W_p^{\dagger}\right),
\end{align}
where the sum over $p$ in the last line runs over all the plaquettes.
A plaquette is identified by a site and two directions, $p=(\mathbf{s},k,l)$ with $l>k$, and
\begin{equation}
W_p=
Q_{\mathbf{s},k}
Q_{\mathbf{s}+\hat{\mathbf{k}},l}
Q_{\mathbf{s}+\hat{\mathbf{l}},k}^{\dagger}
Q_{\mathbf{s},l}^{\dagger}.
\end{equation}

All of this is mapped onto a qubit system in the following way.
In this work we employ a Jordan-Wigner representation for the fermions such that the computational basis corresponds with the eigenbasis of $\psi_{\mathbf{s}}^{\dagger}\psi_{\mathbf{s}}$,
\begin{equation}
\psi_{\mathbf{s}}=\zeta(\mathbf{s})\sigma_{\mathbf{s}}^+,\qquad
\psi_{\mathbf{s}}^{\dagger}=\zeta(\mathbf{s})\sigma_{\mathbf{s}}^-,
\end{equation}
where $\zeta(\mathbf{s})$ is a product of $Z$ operators along the arbitrarily chosen path for the representation, and
\begin{equation}
\sigma^{\pm}=\frac{\mathbb{1}\pm Z}{2}X=\mathbb{P}^{\pm}X=X\mathbb{P}^{\mp}.
\end{equation}
In addition, we use $\eta$ qubits per link. The eigenbasis
$\ket{m}$ is mapped to the computational basis. The mapping
is defined by writing the nonnegative
integer $m$ in unsigned binary representation.
With the help of Figure~\ref{fig:intro:levels}, it is useful to divide the qubits into $\eta$ levels across the lattice:
the zeroth level is defined to be the set of qubits on the sites together with the least significant qubits on the links.
The $k$\textsuperscript{th} level, with $k=1,\dots,\eta\m1$, is defined to be the set of qubits associated with the $k$\textsuperscript{th} binary digits of the links.

\begin{figure}
\begin{minipage}{0.5\textwidth}
\includegraphics{Notation-examples}
\caption{
Two examples of the notation introduced in the text, a controlled-$Z$ and a Toffoli operator.
}
\label{notation-examples}
\end{minipage}%
\begin{minipage}{0.5\textwidth}
\includegraphics{Z8-link-operators}
\caption{
Qubit decompositions for the $Q$ and $P$ operators on a link of the $\mathbb{Z}_8$ theory.
}
\label{link-operators}
\end{minipage}
\end{figure}

\begin{figure}[b]
\includegraphics{Levels}
\caption{
Representation of links and sites encoded on qubits for the $\mathbb{Z}_8$ theory in one dimension, requiring three qubits per link.
The black dots are qubits, the vertical cyan stripes are links.
We also depict with the horizontal green stripes the partition of the qubits in levels as introduced in the text.
}
\label{fig:intro:levels}
\end{figure}

To write a decomposition of the link operators in terms of qubit operators, we introduce now a convenient notation used throughout the manuscript.
We denote by $X_{a_1,\dots,a_k}^b$ the multicontrolled Toffoli gate with controls on the qubits $a_1,\dots,a_k$ and target on $b$, and by $Z_{a_1,\dots,a_k}$ the multicontrolled $Z$ operator with controls on $a_1,\dots,a_k$.
An anticontrol is denoted by a bar over the index.
See Figure~\ref{notation-examples} for a couple of examples.
Furthermore, $X_a$, $Z_a$ and $\diag(\cdot,\cdot)_a$ denote respectively an $X$, a $Z$ and a diagonal operator acting on qubit $a$.
With these conventions, the gauge operators acting on a certain link with qubits $0,\dots,\eta-1$ can be expressed as
\begin{equation}
Q=\left(\prod_{k=\eta-1}^1
X_{0,\dots,k-1}^k\right)X_0=
X_{0,\dots,\eta-2}^{\eta-1}\cdots X_0^1X_0\qquad
P=\prod_{k=0}^{\eta-1}\diag(1,e^{i\pi/2^{\eta-k-1}})_{(i,k)}.
\label{eq:intro:Q-and-P}
\end{equation}
In Figure~\ref{link-operators} we show the operators
of $\mathbb{Z}_8$ as an example.

\section{Binary Gauss stabilizers for the physical subspace}
\label{sec:binary-stabilizer}
In Refs.~\cite{rajput2023quantum,Spagnoli-Roggero-Wiebe_2024-05}, it is shown that the qubit-encoded, physical Hilbert space of a $\mathbb{Z}_2$ theory is stabilized by a set of Pauli stabilizers. The construction has been recently extended to $\mathbb{Z}_N$ theories with $N$ prime where a qudit mapping is similarly stabilized by generalized Pauli stabilizers~\cite{Spagnoli:2026qni}.
This is no longer the case when we consider a $\mathbb{Z}_{2^\eta}$ theory with $\eta>1$.
To see why, it is enough to look at the dimension $D$ of $\mathcal{H}_{\text{phys}}$.
Considering Eq.~\eqref{number-of-physical-states}, $D$ is a power of $2$ for $N=2$, while for $N>2$, we conjecture that $D$ is not a power of $2$.
On the other hand, it is easy to show that the dimension of a space stabilized by a Pauli stabilizer group is always a power of $2$.
Thus, it may be possible to find a Pauli stabilizer group only if the dimension of $\mathcal{H}_{\text{phys}}$ is a power of $2$.
We conclude that $\mathcal{H}_{\text{phys}}$ cannot be stabilized by a Pauli stabilizer group for $N>2$.

Rather, what we find here is, in one dimension, a description of $\Hp$ in terms of stabilizers made of products of $Z$ and controlled-$Z$ operators.
In higher dimensions, we need also multicontrolled-$Z$ operators, with the number of controls growing with the number $\eta$ of qubits per link.

\subsection{One dimension}
\label{subsec:binary-stabilizer-1d}

Let us start for simplicity by considering the one-dimensional case.
In one dimension, a site and its right link are both identified by a single integer $s$.
We label the qubit of site $s$ by the same index $s$, and the
$k$\textsuperscript{th}-digit qubit of link $s$ by $(s,k)$, in such a way that
$(s,\eta\m1)$ is the most-significant qubit of link $s$.
Due to the use of staggered fermions, qubits on even links often have anticontrols, while qubits on odd links often have controls.
For this reason, for the link qubits we introduce the following notation,
\begin{equation}
\widetilde{(s,k)}=
\begin{cases}
\overline{(s,k)} & \text{if $s$ is even,}\\[2mm]
(s,k) & \text{if $s$ is odd.}
\end{cases}
\label{eq:1d:tilde-notation}
\end{equation}

Consider the Gauss operators $G_s$ of Eq.~\eqref{eq:Gauss_operators}. Remember that we called the physical Hilbert space of the theory $\Hp$ the space stabilized by $G_s$, for every $s$. This means that $\Hp$ is spanned by states that are simultaneous eigenvectors with $+1$ eigenvalues of $G_s$ for every $s$.
The aim of this section is to introduce an alternative set of operators, which we denote by $g_{s,k}$, with $k=0,\cdots,\eta-1$, such that the space stabilized by all $g_{s,k}$ is still $\Hp$.

The new operators read
\begin{equation}
\begin{aligned}
    g_{s,0} &= (-1)^sZ_{(s-1,0)}Z_sZ_{(s,0)}\;, \\
    g_{s,k} &= Z_{(s-1,k)}Z_{(s,k)} Z_{\widetilde{(s-1,k-1)},\widetilde{(s,k-1)}}\;,
\end{aligned}
\label{eq:1d_generators}
\end{equation}
and they generate the stabilizer group $\mathcal{S}_g$.
We have
\begin{equation}
\begin{aligned}
\Hp&=
\left\{
\ket{\psi}\in\mathcal{H}\,:\,
G_s\ket{\psi}=\ket{\psi}\quad
\forall s=0,\dots,V-1
\right\}\\
&=
\left\{
\ket{\psi}\in\mathcal{H}\,:\,
g_{s,k}\ket{\psi}=\ket{\psi}\quad
\forall s=0,\dots,V-1\quad
\forall k=0,\dots,\eta-1
\right\}.
\end{aligned}
\label{eq:1d_stabilizer-equivalence}
\end{equation}
The three generators $g_{s,0}$, $g_{s,1}$ and $g_{s,2}$ of the stabilizer group for $\mathbb{Z}_8$ are depicted in Figure~\ref{binary-Gauss-stabilizer-Z8-1d}.
The rest of this subsection is dedicated to the proof of Eq.~\eqref{eq:1d_stabilizer-equivalence}.

\begin{figure}
\includegraphics{Binary-Gauss-stabilizer-Z8-1d}
\caption{
Generators of the binary Gauss stabilizer group for the $\mathbb{Z}_8$ theory in one dimension.
The orange lines represent controlled-$Z$ operators.
In each box, we show the qubits of site $s$ and the two links $s\m1$ and $s$.
}
\label{binary-Gauss-stabilizer-Z8-1d}
\end{figure}

We first prove that the set of states stabilized by $G_s$ coincides with the set of states stabilized by
\begin{equation}
\begin{aligned}
g'_{s,0}&=(-1)^sZ_{(s-1,0)}Z_sZ_{(s,0)}\\
g'_{s,k}&=(-1)^sZ_{(s-1,k)}Z_sZ_{(s,k)}\prod_{l=0}^{k-1}
Z_{\widehat{(s-1,l)},\widehat{(s,l)}}\qquad k=1,\dots,\eta-1,
\end{aligned}
\end{equation}
where
\begin{equation}
\widehat{(s,k)}=
\begin{cases}
(s,k) & \text{if $s$ is even,}\\[2mm]
\overline{(s,k)} & \text{if $s$ is odd.}
\end{cases}
\end{equation}
Then, notice that the operators $g'_{s,k}$ and $g_{s,k}$ generate the same stabilizer group, since one set of generators can be written as products of the other generators,
\begin{equation}
g_{s,k}=g'_{s,k-1}g'_{s,k},
\qquad\qquad
g'_{s,k}=\prod_{l=0}^kg_{s,k}.
\end{equation}
These two identities can be checked explicitly using $Z_{\bar{a},b}Z_aZ_b=Z_{a,\bar{b}}$.

Let us focus on a single constraint $G_s$ and consider only the registers of qubits it acts on. Let us take a state
\begin{equation}
\ket{\psi_s}=\ket{a_{\eta-1}\dots a_0}_{s-1}\ket{f_s}_s\ket{b_{\eta-1}\dots b_0}_s\qquad\qquad
m_{s-1}=\sum_{k=0}^{\eta-1}a_k2^k\qquad
m_s=\sum_{k=0}^{\eta-1}b_k2^k.
\label{m-decomposition}
\end{equation}
The constraint $G_s\ket{\psi_s}=\ket{\psi_s}$ is equivalent to
\begin{align}
m_s=(m_{s-1}+f_s)&\mod 2^{\eta}\qquad\text{if $s$ is even,}
\label{eq:1d:even-Gauss-law}\\
m_{s-1}=(m_s+1-f_s)&\mod 2^{\eta}\qquad\text{if $s$ is odd.}
\end{align}
Our aim is to write these relations in their bit-wise form.
For simplicity, we can focus on even $s$, since the case of odd $s$ is completely analogous after the substitutions $f_s\rightarrow(1-f_s)$ and $m_s\leftrightarrow m_{s-1}$.
The carry bits of the sum $m_{s-1}+f_s$ are
\begin{equation}
\begin{aligned}
\alpha_1&=a_0f_s\\
\alpha_k&=a_{k-1}\alpha_{k-1}\quad k=2,\dots,\eta-1.
\end{aligned}
\label{eq:1d:carries}
\end{equation}
Then, the number $m_{s-1}+f_s\mod 2^{\eta}$ in binary representation is $\sigma_{\eta-1}\dots\sigma_0$, with
\begin{equation}
\begin{aligned}
\sigma_0&=a_0\oplus f_s\\
\sigma_k&=a_k\oplus \alpha_k\quad k=1,\dots,\eta-1,
\end{aligned}
\end{equation}
and Eq.~\eqref{eq:1d:even-Gauss-law} in bit-wise form is $\sigma_k=b_k$, or
\begin{equation}
\begin{aligned}
a_0\oplus b_0\oplus f_s&=0\\
a_k\oplus b_k\oplus \alpha_k&=0\quad k=1,\dots,\eta-1.
\end{aligned}
\end{equation}
Summarizing, we have found that
\begin{equation}
G_s\ket{\psi_i}=\ket{\psi_s}\iff
\begin{cases}
a_0\oplus b_0\oplus f_s=0\\
a_k\oplus b_k\oplus\alpha_k=0\quad k=1,\dots,\eta-1
\end{cases}.
\label{bit-wise-Gauss}
\end{equation}
As shown in Lemma~\ref{lemma:ck-equivalence} (see Appendix.~\ref{app:lemma1}) with $\alpha_0=f_s$, we can solve the recursion relations in Eq.~\eqref{eq:1d:carries} and write the carry bits more explicitly as
\begin{equation}
\alpha_k=f_s\oplus\left(\bigoplus_{l=0}^{k-1}\bar{a}_lb_l\right)\;.
\end{equation}

On the other hand, $Z_{\overline{(s-1, l)},(s,l)}\ket{\psi_s}=(-1)^{\bar{a}_lb_l}\ket{\psi_s}$, and thus
\begin{equation}
Z_s \prod_{l=0}^{k-1}Z_{\overline{(s-1, l)}, (s,l)}\ket{\psi_s} = (-1)^{f_s} \prod_{l=0}^{k-1} (-1)^{\bar{a}_lb_l}\ket{\psi_s} = (-1)^{f_s \oplus \left(\bigoplus_{l=0}^{k-1}(\bar{a}_lb_l)\right)}\ket{\psi_s}=
(-1)^{\alpha_k}\ket{\psi_s}.
\end{equation}
From this equation, it follows immediately that $g'_{s,k}\ket{\psi_s}=(-1)^{a_k\oplus b_k\oplus \alpha_k}\ket{\psi_s}$, and so
\begin{equation}
g'_{s,k}\ket{\psi_s}=\ket{\psi_s}\quad\forall k=0,\dots,\eta-1
\quad\iff\quad
\begin{cases}
a_0\oplus b_0\oplus f_s=0\\
a_k\oplus b_k\oplus\alpha_k=0\quad k=1,\dots,\eta-1
\end{cases}.
\end{equation}
Comparing with Eq.~\eqref{bit-wise-Gauss}, the proof of Eq.~\eqref{eq:1d_stabilizer-equivalence} is complete.

In summary, our main result is to replace each Gauss operator by $\eta$ operators encoding the Gauss' law in binary form.
For this reason we call them binary Gauss stabilizers.
This is an alternative way to describe the physical subspace of a gauge theory, and it may be used as a new tool to analyze such theories.
In Sections~\ref{sec:QEC} and~\ref{sec:gauge-fixing}, we prove its usefulness by showing two important applications, one in error correction and the other in removing the gauge redundancies.

\subsection{Two dimensions}
\label{subsec:binary-stabilizer-2d}
We now show how to generalize the stabilizer group of the previous section to higher dimensions, focusing on the two-dimensional case.
The idea described here should be straightforwardly applicable to arbitrary dimensions.
To keep the discussion concise, we show only the case of $\mathbb{Z}_4$ in the main text, and the general case of $\mathbb{Z}_{2^{\eta}}$ in Appendix~\ref{app:binary-stabilizer-2d-generic-eta}.

Given a site $\mathbf{s}$, we denote by $A$, $B$, $C$ and $D$ the east, north, west and south links attached to $\mathbf{s}$.
If we take
\begin{equation}
\label{eq:vertex_state}
\ket{\psi_{\mathbf{s}}}=\ket{m_A}_A\ket{m_B}_B
\ket{f_\mathbf{s}}_{\mathbf{s}}
\ket{m_C}_C\ket{m_D}_D=
\ket{a_{\eta-1}\dots a_0}_A\ket{b_{\eta-1}\dots b_0}_B
\ket{f_{\mathbf{s}}}_{\mathbf{s}}
\ket{c_{\eta-1}\dots c_0}_C\ket{d_{\eta-1}\dots d_0}_D\;,
\end{equation}
the constraint $G_{\mathbf{s}}\ket{\psi_{\mathbf{s}}}=\ket{\psi_{\mathbf{s}}}$ is equivalent to
\begin{align}
(m_A+m_B)\mod 2^{\eta}&=(m_C+m_D+f_{\mathbf{s}})
\mod 2^{\eta}&\quad&\text{if $\abs{\mathbf{s}}$ is even,}
\label{eq:2d:even-conservation-law}\\
(m_C+m_D)\mod 2^{\eta}&=(m_A+m_B+1-f_{\mathbf{s}})
\mod 2^{\eta}&\quad&\text{if $\abs{\mathbf{s}}$ is odd.}
\label{eq:2d:odd-conservation-law}
\end{align}

For simplicity, let us focus on a site with even $\abs{\mathbf{s}}$, as the odd case follows immediately after minor changes.
We start by writing the bit-wise version of the relation in Eq.~\eqref{eq:2d:even-conservation-law}.
To do this, we first need to compute the carry bits $\alpha_k$ and $\beta_k$ of the additions $m_A+m_B$ and $m_C+m_D+f_{\mathbf{s}}$.
Setting $\alpha_0=0$ and $\beta_0=f_{\mathbf{s}}$, we have the recursion relations
\begin{equation}
\alpha_k=\alpha_{k-1}a_{k-1}
\oplus\alpha_{k-1}b_{k-1}
\oplus a_{k-1}b_{k-1}\qquad
\beta_k=\beta_{k-1}c_{k-1}
\oplus\beta_{k-1}d_{k-1}
\oplus c_{k-1}d_{k-1}
\label{eq:2d:carries}
\end{equation}
for $k=1,\dots,\eta-1$.
Then, the numbers $m_A+m_B\,\mod2^{\eta}$ and $m_C+m_D+f_{\mathbf{s}}\,\mod2^{\eta}$ in binary representation are respectively $\sigma_{\eta-1}\dots\sigma_0$ and $\tau_{\eta-1}\dots\tau_0$, with
\begin{equation}
\sigma_k=a_k\oplus b_k\oplus\alpha_k,
\qquad
\tau_k=c_k\oplus d_k\oplus\beta_k,
\qquad k=0,\dots,\eta-1.
\end{equation}
The bit-wise version of Eq.~\eqref{eq:2d:even-conservation-law} is $\sigma_k=\tau_k$, or
\begin{align}
&a_0\oplus b_0\oplus c_0\oplus d_0\oplus f_{\mathbf{s}}=0
\label{eq:2d:bit-wise-conservation-laws-0}\\
&a_k\oplus b_k\oplus c_k\oplus d_k\oplus\alpha_k\oplus\beta_k=0
\qquad
k=1,\dots,\eta-1.
\label{eq:2d:bit-wise-conservation-laws-k}
\end{align}

To find the stabilizer group, we proceed in two steps.
First, we find circuits checking whether the relations in Eq.~\eqref{eq:2d:bit-wise-conservation-laws-0} and~\eqref{eq:2d:bit-wise-conservation-laws-k} hold.
These circuits require two ancillary registers where to compute the carries $\alpha_k$ and $\beta_k$.
Second, we use commutation identities between multicontrolled-$X$ and $Z$ gates to simplify the circuits and remove the ancillary registers.
The circuits are only a formal tool, and do not need to be implemented in practice.

\begin{figure}
\includegraphics{Z4-2d}
\caption{
Generators for the stabilizer group of the $\mathbb{Z}_4$ theory in two dimensions.
The ancillary qubits $(\alpha,1)$ and $(\beta,1)$ are initially set in the $\ket{0}$ state.
The operator in the yellow box is $g_{\mathbf{s},0}$, the one in the orange box on the left-hand side is $g'_{\mathbf{s},1}$, and the one in the orange box on the right-hand side is $g_{\mathbf{s},1}$.
}
\label{fig:2d:Z4-stabilizer-circuit}
\end{figure}

A detailed example for $\mathbb{Z}_4$ is shown in Figure~\ref{fig:2d:Z4-stabilizer-circuit}.
The operator in the yellow box is $g_{\mathbf{s},0}$, whose action is
\begin{equation}
g_{\mathbf{s},0}\ket{\psi_{\mathbf{s}}}=
(-1)^{a_0\oplus b_0\oplus c_0\oplus d_0\oplus f_{\mathbf{s}}}
\ket{\psi_{\mathbf{s}}}.
\end{equation}
Thus, we see that the eigenstates with eigenvalue $1$ of this operator are the states that satisfy Eq.~\eqref{eq:2d:bit-wise-conservation-laws-0}.

We denote by $g'_{\mathbf{s},1}$ the operator in the orange box on the left-hand side of Figure~\ref{fig:2d:Z4-stabilizer-circuit}.
In the case of $\mathbb{Z}_4$, we only need one ancillary qubit $(\alpha,1)$ for $\alpha_1=a_0b_0$, and one ancillary qubit $(\beta,1)$ for $\beta_1=f_{\mathbf{s}}c_0\oplus f_{\mathbf{s}}d_0\oplus c_0d_0$.
Both the ancillae are initially set in the $\ket{0}$ state.
Recalling the action of a Toffoli gate on the computational basis,
\begin{equation}
X_{a,b}^c\ket{x}_a\ket{y}_b\ket{z}_c=\ket{x}_a\ket{y}_b\ket{z\oplus xy}_c,
\label{eq:2d:Toffoli-action}
\end{equation}
we compute $\alpha_1$ and $\beta_1$ with one and three Toffoli gates respectively, on the left-most part of Figure~\ref{fig:2d:Z4-stabilizer-circuit}.
After these four Toffoli gates, the six $Z$ operators in the orange box together give
\begin{equation}
(-1)^{a_1\oplus b_1\oplus c_1\oplus d_1\oplus\alpha_1\oplus\beta_1}
\ket{\psi_{\mathbf{s}}}\ket{\alpha_1}\ket{\beta_1}.
\label{eq:2d:six-Z-operators}
\end{equation}
After the $Z$ operators, we need to uncompute $\alpha_1$ and $\beta_1$ with the last four Toffoli gates.
As a consequence of Eq.~\eqref{eq:2d:six-Z-operators}, the six $Z$ operators in the orange box act like the identity on the states satisfying Eq.~\eqref{eq:2d:bit-wise-conservation-laws-k}.
Then, if the $Z$ operators act like the idenity, the eight Toffoli gates cancel against each other.
Therefore, we see that the eigenstates with eigenvalue $1$ of $g'_{\mathbf{s},1}$ are the states that satisfy Eq.~\eqref{eq:2d:bit-wise-conservation-laws-k}.
This completes the first step of our procedure to find $g_{\mathbf{s},1}$.

We now describe the second step, which is needed to simplify this operator, and obtain the final form of $g_{\mathbf{s},1}$, shown on the right-hand side of Figure~\ref{fig:2d:Z4-stabilizer-circuit}.
Starting from the left-hand side, and using the identity $X_{a,b}^cZ_cX_{a,b}^c=Z_cZ_{a,b}$ shown in Figure~\ref{fig:2d:conjugation-identity}, we move the $Z$ operators on the ancillary qubits to the left of the circuit.
Finally, we notice that $Z\ket{0}=\ket{0}$, so we can remove the two $Z$ from the ancillary qubits. What remains is the operator $g_{\mathbf{s},1}$ in the orange box on the right-hand side of Figure~\ref{fig:2d:Z4-stabilizer-circuit}.
We have that $g_{\mathbf{s},1}$ and $g'_{\mathbf{s},1}$ are the same on the subspace in which the ancillary qubits are in the $\ket{0}$ state.

We can proceed in the same way to find the generators of $\mathbb{Z}_{2^{\eta}}$ with $\eta>2$, as detailed in Appendix~\ref{app:binary-stabilizer-2d-generic-eta}.
We will find again the same $g_{\mathbf{s},0}$ and $g_{\mathbf{s},1}$.
In addition, we will find $g_{\mathbf{s},k}$ with $k$ up to $\eta-1$, one for each relation in Eq.~\eqref{eq:2d:bit-wise-conservation-laws-k}.
All the generators share the common structure
\begin{equation}
g_{\mathbf{s},k}=
Z_{(A,k)}\otimes
Z_{(B,k)}\otimes
Z_{(C,k)}\otimes
Z_{(D,k)}\otimes
\mathcal{C}_{\mathbf{s},k},
\label{eq:2d:generator-structure}
\end{equation}
where $\mathcal{C}_{\mathbf{s},k}$ is a product of controlled-$Z$ and multicontrolled-$Z$ operators acting on site $\mathbf{s}$ and on the links $A$, $B$, $C$ and $D$ up to the level $k\m1$.
The number of operators in $\mathcal{C}_{\mathbf{s},k}$ grows exponentially with $k$, and it contains up to $(k\p1)$-controlled-$Z$ operators.

\begin{figure}[b]
\includegraphics{Conjugation-identity}
\caption{Circuital representation of the identity $X_{a,b}^cZ_cX_{a,b}^c=Z_cZ_{a,b}$.}
\label{fig:2d:conjugation-identity}
\end{figure}

We conclude this section as we opened it, with a remark on the number of states in the physical space.
Like the operators in the Pauli group, the binary Gauss stabilizers are reflections, in the sense that they only have $\pm 1$ eigenvalues and they square to the identity.
Unlike the operators in the Pauli group, not all of them are traceless.
All the $g_{\mathbf{s},k}$ generators we wrote are traceless, as they are tensor products of two or more Pauli operators with other operators, and $\tr(A\otimes B)=\tr(A)\tr(B)$.
However, the operators
\begin{equation}
g_k=\prod_{\mathbf{s}}g_{\mathbf{s},k},\qquad
k=1,\dots,\eta-1,
\end{equation}
no longer contain Pauli operators, as the $Z$ operators all cancel against each other.
It can also be checked that the operators $g_k$ are not traceless.
Therefore, they are responsible for the fact that the number of physical states is not a power of two.
This is because the number of $+1$ eigenvalues of $g_k$ is not half the total number of states in $\mathcal{H}$, and so, the number of states selected by $g_k$ is not a power of two.

\section{Application: error correction}
\label{sec:QEC}

So far, we have introduced the binary Gauss operators, denoted by $g_{\mathbf{s},k}$, which stabilize the same physical Hilbert space as the Gauss operators $G_{\mathbf{s}}$.
In this and the next sections, we will provide some applications of these new stabilizers, highlighting their relevance as an independent way to study the gauge constraints of lattice gauge theories.
In particular, we will focus on two main directions:
how to use the constraints to do error correction, generalizing the approach of $\mathbb{Z}_2$, and how to enforce the gauge symmetry, removing the gauge redundancies and the memory overhead to encode the nonphysical part of the Hilbert space.

\subsection{Error correction for $\mathbb{Z}_4$ in one dimension}
\label{subsec:QEC-1d-eta2}
In this subsection we construct a Pauli-stabilizer error-correcting code starting from the stabilizer group introduced in the previous section. We show the construction explicitly in the case of $\mathbb{Z}_4$ in one dimension, but the idea holds for $\mathbb{Z}_{2^{\eta}}$ with larger $\eta$, and in higher dimensions. We will briefly discuss the generalization in the subsequent subsections.

The idea is to apply a unitary transformation $U$ to turn some of the non-Pauli generators $g_{s,k}$ into Pauli ones.
Then, for the purposes of error correction, we can forget about the remaining non-Pauli generators, and use only the Pauli ones.
The price to pay is a constant-factor increase in the cost to implement the operators in the Hamiltonian.
Suppose that $g$ stabilizes a state $\ket{\psi}$, namely $g\ket{\psi}=\ket{\psi}$.
If we apply a change of basis $U$, we get
\begin{equation}
    U\ket{\psi} = Ug\ket{\psi} = UgU^\dagger U\ket{\psi} = \tilde{g} U\ket{\psi}.
\end{equation}
In words, the transformed state $U\ket{\psi}$ is stabilized by the conjugated stabilizer $\tilde{g}=UgU^{\dagger}$.
We will also use the identity
\begin{equation}
X_{a,b}^cZ_cX_{a,b}^c=Z_cZ_{a,b},
\label{eq:QEC:conjugation-identity}
\end{equation}
shown in Figure~\ref{fig:2d:conjugation-identity}. We will choose $U$ to be an appropriate product of Toffoli gates to cancel controlled-$Z$ operators in the generators with the controlled-$Z$ appearing in Eq.~\eqref{eq:QEC:conjugation-identity}.

Consider now the stabilizer group $\mathcal{S}_g$ of $\mathbb{Z}_4$, generated by the operators of Eq.~\eqref{eq:1d_generators},
\begin{equation}
g_{s,0}=(-1)^sZ_{(s-1,0)}Z_sZ_{(s,0)},\qquad
g_{s,1}=Z_{(s-1,1)}Z_{(s,1)}Z_{\widetilde{(s-1,0)}\widetilde{(s,0)}}\;,
\end{equation}
on every site $s$.
Furthermore, define the unitary operator
\begin{equation}
U_{s-1\leftarrow s}=X_{\widetilde{(s-1,0)}\widetilde{(s,0)}}^{(s-1,1)}.
\label{eq:QEC:one-step-left}
\end{equation}
This is a Toffoli operator obtained by attaching a target on qubit $(s\m1,1)$ to the controlled-$Z$ operator $Z_{\widetilde{(s-1,0)}\widetilde{(s,0)}}$.
When conjugating $\mathcal{S}_g$ with this transformation, we get
\begin{equation}
\begin{aligned}
g_{s,1}&\longrightarrow U_{s-1\leftarrow s}\,g_{s,1}\,U_{s-1\leftarrow s}^{\dagger} = Z_{(s-1,1)}Z_{(s,1)}\\
g_{s-1,1}&\longrightarrow U_{s-1\leftarrow s}\,g_{s-1,1}\,U_{s-1\leftarrow s}^{\dagger} = Z_{(s-2,1)}Z_{(s-1,1)} Z_{\widetilde{(s-2,0)}\widetilde{(s-1,0)}} Z_{\widetilde{(s-1,0)}\widetilde{(s,0)}}.
\end{aligned}
\end{equation}
We have turned the non-Pauli generator $g_{s,1}$ into a Pauli operator, while $g_{s-1,1}$ after the transformation carries two controlled-$Z$ operators.
In short, we can say that we have moved the controlled-$Z$ operator from $g_{s,1}$ to the transform of $g_{s-1,1}$.
All the other generators of the stabilizer group are left untouched because $U_{s-1\leftarrow s}$ commutes with them.
Similarly, we can define
\begin{equation}
U_{s\rightarrow s+1}=X_{\widetilde{(s-1,0)}\widetilde{(s,0)}}^{(s,1)},
\label{eq:QEC:one-step-right}
\end{equation}
which moves the controlled-$Z$ operator from $g_{s,1}$ to the transform of $g_{s+1,1}$.

For simplicity, let us assume the number of sites to be a (even) multiple of 3, and denote with $\mathcal{A}$ the subset of sites labeled by the multiples of 3, $\mathcal{A}=\{s\,:\,s\mod3=0\}$.
Then, the transformation
\begin{equation}
U=\prod_{a\in\mathcal{A}}U_{a-1\rightarrow a}U_{a\leftarrow a+1}
\label{eq:QEC:change-of-basis}
\end{equation}
acts on the generators of $\mathcal{S}_g$ as
\begin{equation}
\begin{aligned}
\tilde{g}_{s,0}&=g_{s,0}\qquad\forall s,\\[4mm]
\tilde{g}_{s,1}&=
\begin{cases}
Z_{(s-1,1)}Z_{(s,1)}\prod_{t=s-1}^{s+1}
Z_{\widetilde{(t-1,0)},\widetilde{(t,0)}} & \,\,s\in\mathcal{A},\\
Z_{(s-1,1)}Z_{(s,1)} &\,\,\text{otherwise.}
\end{cases}
\end{aligned}
\label{eq:QEC:transformed-generators}
\end{equation}
We see that the generators $g_{s,1}$ with $s\notin\mathcal{A}$ have been turned into Pauli operators, while the generators $g_{s,0}$ have not been changed.
All the controlled-$Z$ operators are now carried by the generators $\tilde{g}_{a,1}$ with $a\in\mathcal{A}$. The transformation is visualized in Figure~\ref{fig:QEC:accumulating-CZ-1d}.

\begin{figure}
\includegraphics{Accumulating-CZ-1d}
\caption{
Effect of the transformation $U$ on the first-level generators.
The coloured lines are controlled-$Z$ operators defined according the tilde notation in Eq.~\eqref{eq:1d:tilde-notation}, and the black dots are qubits.
After the transformation, the Pauli generators are highlighted by the coloured circles.
A dashed box highlights a triplet of qubits in the repetition code.
}
\label{fig:QEC:accumulating-CZ-1d}
\end{figure}

For the purposes of error correction, we can forget about the non-Pauli generators, and focus only on the Pauli generators.
In fact, the $\tilde{g}$ Pauli generators are enough to build a distance-three classical code, which can be used to correct bit-flip errors in a standard way.
To see this, first notice that the generators $\tilde{g}_{s,0}=g_{s,0}$ form exactly the $\mathbb{Z}_2$-gauge-covariant code of Reference~\cite{Spagnoli-Roggero-Wiebe_2024-05}, so they are enough to correct single-qubit errors on the site qubits and the link qubits $(s,0)$.
The remaining Pauli generators $\tilde{g}_{s,1}$ with $s\notin\mathcal{A}$ are nothing but copies of the three-qubit repetition code, one for each $a\in\mathcal{A}$, highlighted by the dashed boxes in Fig.~\ref{fig:QEC:accumulating-CZ-1d}.
For a given $a$, the copy is formed by $\tilde{g}_{a+1,1}$ and $\tilde{g}_{a+2,1}$.

The parameters of the code can be found by considering the number of physical qubits, and the number of Pauli stabilizers.
In order to encode the system with $V$ sites and $V$ links we used $2V+V=3V$ physical qubits.
The constraints are $V$ on the zeroth level, $\tilde{g}_{s,0}$, and $2V/3$ on the first level, $\tilde{g}_{s,1}$ with $s\notin\mathcal{A}$, for a total of $5V/3$.
Thus, we have obtained a $[3V, 4V/3, 3]$ classical code using only the gauge symmetry.
The transformed physical space $\tilde{\mathcal{H}}_{\text{phys}}=U\mathcal{H}_{\text{phys}}U^{\dagger}$ is embedded in the codespace, since we have obtained the codespace by removing some of the constraints that identify the physical space.

Since we have applied the change of basis $U$ in Eq.~\eqref{eq:QEC:change-of-basis}, we also need to transform the relevant operators appearing in the $\mathbb{Z}_4$ Hamiltonian of Eq.~\eqref{eq:hamiltonian}.
The fermionic operators $\psi_s, \psi_s^\dagger$ are not modified, as $U$ acts only on the links.
We then need to find the form for the $P_s$ and $Q_s$ operators in Eq.~\eqref{eq:intro:Q-and-P} in the new basis.
For the case of $\mathbb{Z}_4$ Eq.~\eqref{eq:intro:Q-and-P} becomes explicitly
\begin{equation}
Q_s=X_{(s,0)}^{(s,1)}X_{(s,0)}
\qquad
P_s=S_{(s,0)}Z_{(s,1)}.
\end{equation}
After the transformation, these operators are replaced by
\begin{equation}
\begin{aligned}
\tilde{Q}_a=UQ_aU^{\dagger}&=X_{\widetilde{(a+1,0)}}^{(a,1)}Q_a
&\qquad\quad
\tilde{P}_a=UP_aU^{\dagger}&=Z_{\widetilde{(a,0)},\widetilde{(a+1,0)}}P_a\\
\tilde{Q}_{a+1}=UQ_{a+1}U^{\dagger}&=X_{\widetilde{(a,0)}}^{(a,1)}X_{\widetilde{(a+2,0)}}^{(a+2,1)}Q_{a+1}
&\qquad\quad
\tilde{P}_{a+1}=UP_{a+1}U^{\dagger}&=P_{a+1}\\
\tilde{Q}_{a+2}=UQ_{a+2}U^{\dagger}&=X_{\widetilde{(a+1,0)}}^{(a+2,1)}Q_{a+2}
&\qquad\quad
\tilde{P}_{a+2}=UP_{a+2}U^{\dagger}&=Z_{\widetilde{(a+1,0)},\widetilde{(a+2,0)}}P_{a+2},
\end{aligned}
\label{eq:QP_transformed}
\end{equation}
for a generic $a\in\mathcal{A}$.

\begin{figure}
\subfloat[][Transformed link operators from Eq.~\eqref{eq:QP_transformed}. The gray controls are controls or anti-controls depending on the site index being even or odd.]{\includegraphics{Transformed-link-operators}}\\
\subfloat[][Tilde notation: explicit circuits for the gray-controls notation.]{\includegraphics{Tilde-notation.pdf}}
\caption{\label{fig:QEC:transformed-link-operators}
Circuital representations of the transformed link operators.
The big gray controls reflect the tilde notation introduced in Eq.~\eqref{eq:1d:tilde-notation}, and they denote anticontrols when on even links, controls when on odd links.
}
\end{figure}

The circuital representations of the transformed link operators are shown in Figure~\ref{fig:QEC:transformed-link-operators}, where we use the big gray controls to reflect the tilde notation introduced in Eq.~\eqref{eq:1d:tilde-notation}. It is easy to see that a controlled-$Z$ operator is added to every operator $P_s$, while a CNOT is added to every $Q_s$, apart when $s\mod 3=1 $, in which case $Q_s$ gets $2$ CNOTs while $P$ is left unchanged. However, these operators are not the gauge invariant operators that appear in the Hamiltonian. In particular, the $Q_s$ operators always appear in the hopping terms as $\psi_{a+1}^{\dagger}\tilde{Q}_a\psi_a$, and when considering the full gauge invariant operator, some of the $\tilde{Q}_s$ operators can be simplified.

Looking at its circuit representation, we see that $\tilde{Q}_a$ contains two CNOTs targeting the qubit $(a,1)$.
As we have seen, this qubit is in a repetition code with the other two qubits $(a\p1,1)$ and $(a\p2,1)$, so it cannot be flipped by a gauge invariant operator without flipping the three qubits together.
As $\tilde{Q}_a$ does not act on $(a\p1,1)$ and $(a\p2,1)$, it cannot flip $(a,1)$ either, and the only possibility is that the product of CNOTs in $\tilde{Q}_a$ must act trivially.
As a sanity check, it can be seen that whenever the two CNOTs act nontrivially, the state is annihilated by $\psi_{a+1}^{\dagger}$ or $\psi_a$.
We conclude that, when applying the hopping term $\psi_{a+1}^{\dagger}\tilde{Q}_a\psi_a$, the two CNOTs in $\tilde{Q}_a$ can be removed.
The same considerations hold for $\tilde{Q}_{a+2}$, but not for $\tilde{Q}_{a+1}$. This means that, when considering the gauge invariant hopping term, the three CNOTs in $\tilde{Q}_{a+1}$ either activate simultaneously, or none of them activate.

It is interesting to notice that the total number of gates required for the three hopping terms on links $a$, $a\p1$ and $a\p2$ is conserved by the change of basis $U$ in the case of $\mathbb{Z}_4$. Indeed, we started from the three $Q_s$, each one having a CNOT, and we ended up with the three $\tilde{Q}_s$, two of which have no CNOTs, and one of them has three.
In order to use the gauge constraints to correct bit-flip errors, the only price we need to pay is the two extra controlled-$Z$ every three $P$ operators shown in Figure~\ref{fig:QEC:transformed-link-operators}.

To conclude, we comment on what happens when the number of sites is not a multiple of three.
The case $V\mod3=0$ is ideal because every site $s\notin\mathcal{A}$ is neighbour to a site $a\in\mathcal{A}$, so the operators $U_{s-1\leftarrow s}$ in Eq.~\eqref{eq:QEC:one-step-left} and $U_{s\rightarrow s+1}$ in Eq.~\eqref{eq:QEC:one-step-right} work on every point of the lattice.
If the number of sites is not a multiple of three, we can still choose $\mathcal{A}$ to be the subset of sites labeled by the multiples of 3, $\mathcal{A}=\{s\,:\,s\mod3=0\}$.
In the case $V\mod3=1$ however, the site $V\m2$ is two sites away from the nearest sites in $\mathcal{A}$, $a=0$ and $a=V\m4$.
We then need to introduce the operators
\begin{equation}
U_{s\rightarrow s+2}=
X_{\widetilde{(s-1,0)}\widetilde{(s,0)}}^{(s+1,1)}
X_{\widetilde{(s,0)}\widetilde{(s+1,0)}}^{(s+1,1)}
U_{s\rightarrow s+1},
\end{equation}
which can be checked to move all the controlled-$Z$ operators in $g_{s,1}$ and $g_{s+1,1}$ to $g_{s+2,1}$.
Then, in Eq.~\eqref{eq:QEC:change-of-basis}, we replace $U_{V-1\rightarrow0}$ with $U_{V-2\rightarrow0}$.
Doing so, and considering again only the Pauli generators after the transformation, it turns out that the qubits $(V\m4,1)$ to $(V\m1,1)$ end up in a four-qubit repetition code.

Finally, in the case $V\mod3=2$, the sites $V\m3$ and $V\m2$ are two sites away from $\mathcal{A}$.
We introduce also
\begin{equation}
U_{s-2\leftarrow s}=
X_{\widetilde{(s-2,0)}\widetilde{(s-1,0)}}^{(s-2,1)}
X_{\widetilde{(s-1,0)}\widetilde{(s,0)}}^{(s-2,1)}
U_{s-1\leftarrow s},
\end{equation}
and, in Eq.~\eqref{eq:QEC:change-of-basis}, we replace $U_{V-3\leftarrow V-2}$ with $U_{V-5\leftarrow V-3}$ and $U_{V-1\rightarrow0}$ with $U_{V-2\rightarrow0}$.
Then, the qubits $(V\m5,1)$ to $(V\m1,1)$ end up in a five-qubit repetition code.
However, the operators on the links $V\m5$ to $V\m1$ get an increased nonlocality in general.

\subsection{Error correction for generic $\mathbb{Z}_{2^\eta}$ in one dimension}
\label{subsec:QEC-1d-generic-eta}

In the previous section we explicitly showed how one can build an error correcting code by exploiting the gauge symmetry, for a $\mathbb{Z}_4$ LGT. Now we will generalize such a result for higher values of the gauge field truncations, considering $\mathbb{Z}_{2^\eta}$, for $\eta > 2$.

Let us define a transformation of the form of $U$ in Eq.~\eqref{eq:QEC:change-of-basis} for every level,
\begin{equation}
\begin{gathered}
U^{(k)}=\prod_{a\in\mathcal{A}}U_{a-1\rightarrow a}^{(k)}U_{a\leftarrow a+1}^{(k)},\\
U_{s-1\leftarrow s}^{(k)}=X_{\widetilde{(s-1,k-1)}\widetilde{(s,k-1)}}^{(s-1,k)}
\qquad
U_{s\rightarrow s+1}^{(k)}=X_{\widetilde{(s-1,k-1)}\widetilde{(s,k-1)}}^{(s,k)}.
\end{gathered}
\end{equation}
With this notation, the transformation in Eq.~\eqref{eq:QEC:change-of-basis} would be $U=U^{(1)}$.
The Toffoli operators of $U^{(k)}$ have targets on the $k$\textsuperscript{th} level, and controls on the $(k\m1)$\textsuperscript{th} level.
For this reason, $U^{(k)}$ does not commute with $U^{(k+1)}$.
To extend the result we obtained for $\eta=2$ to larger $\eta$, we apply first the highest-level transformation $U^{(\eta-1)}$.
Repeating the same steps that led us to Eq.~\eqref{eq:QEC:transformed-generators}, we get the same result on the $(\eta\m1)$\textsuperscript{th} level.
Given $a\in\mathcal{A}$, the generators $g_{a+1,\eta-1}$ and $g_{a+2,\eta-1}$ have been turned into the generators of a repetition code, while the transformed $g_{a,\eta-1}$ carries the controlled-$Z$ operators.
All the generators on the lower levels commute with $U^{(\eta-1)}$, and are left unmodified.

At this point, we can discard the non-Pauli generators on the highest level, the transformed $g_{a,\eta-1}$ with $a\in\mathcal{A}$, as we do not need them for error correction, and we do not care about how they are affected by the next transformations.
Moreover, the transforms of $g_{a+1,\eta-1}$ and $g_{a+2,\eta-1}$ act only on the level $(\eta\m1)$, and we can proceed with $U^{(\eta-2)}$ without affecting them.
The result is again the same as for the highest level. The generators $g_{a+1,\eta-2}$ and $g_{a+2,\eta-2}$ are turned into the generators of a repetition code, while the transformed $g_{a,\eta-2}$ carries the controled-$Z$ operators.
Again, we can discard the non-Pauli generators on the level $\eta\m2$.

We see that we can iterate the procedure in this way until the first level.
The full transformation is
\begin{equation}
U_{(\eta)}=\prod_{k=1}^{\eta-1}U^{(k)}=U^{(1)}\cdots U^{(\eta-1)},
\end{equation}
where the order of multiplication matters.
It transforms the generators into $\tilde{g}_{s,k}=U_{(\eta)}g_{s,k}U_{(\eta)}^{\dagger}$, where the Pauli generators are
\begin{equation}
\begin{aligned}
\tilde{g}_{s,0}&=g_{s,0}&&\forall s,\\[4mm]
\tilde{g}_{s,k}&=Z_{(s-1,1)}Z_{(s,1)}&&
s\notin\mathcal{A},\quad k=1,\dots,\eta-1,
\end{aligned}
\label{eq:QEC:transformed-Pauli-larger-eta}
\end{equation}
and the non-Pauli generators are $\tilde{g}_{a,k}$ with $a\in\mathcal{A}$.
As in the case of $\eta=2$, the Pauli generators in Eq.~\eqref{eq:QEC:transformed-Pauli-larger-eta} form a distance-three classical code, made of one copy of the $\mathbb{Z}_2$ gauge-covariant code of Reference~\cite{Spagnoli-Roggero-Wiebe_2024-05}, together with many copies of the repetition code.

The parameters of the code can be easily computed counting the number of Pauli stabilizers. In a $1$-dimensional lattice with $V$ sites, we have $V+\eta V$ qubits to encode the physical system (considering $\eta$ qubits per link to encode the $2^\eta$ levels of the gauge field). Then, for the level $k=0$, we have one constraint per site, while for the higher levels, we have $2$ constraints every $3$ links, which means $[(\eta+1) V, (\eta+2)V/3 , 3]$.

Notice that, in order to construct this error correcting code, we did not use the entire gauge symmetry.
Indeed, the transformed stabilizers $\tilde{g}_{s,k}$ that are non-Pauli have not been used to construct the error-correcting code.
This means that the space stabilized by this error-correcting code is larger than the physical space stabilized by the Gauss operators $G_s$.
This implies that the logical operations of the error-correcting code are not all gauge-invariant in general.
The gauge invariant operators (which appear in the Hamiltonian) must be the particular linear combinations of logical operations that commute with all the stabilizers of the code, which is given by construction, and also with all the non-Pauli $\tilde{g}_{s,k}$ we did not use for error correction.
This is one of the main differences from the case of $\mathbb{Z}_2$, where the logical degrees of freedom were gauge invariant by construction.
Moreover, this explains how it is possible that, despite the physical space having the dimensionality which is not a power of two, we found a Pauli error-correcting code to stabilize it.
The reason is that we enlarged the stabilized space to make it become a power of two.
The Pauli generators $\tilde{g}_{s,k}$ alone cannot correct all the gauge-violating errors, but only single-qubit bit flips.
However, there is still the possibility to use the remaining non-Pauli ones to detect all the gauge violations, even though we do not explore this here.
In any case, in absence of errors, the evolution under the gauge-invariant Hamiltonian preserve all the constraints, both Pauli and non-Pauli.

\subsection{Error correction for $\mathbb{Z}_4$ in two dimensions}
\label{subsec:QEC-2d}

The techniques discussed in this section can be extended to higher dimensions.
In one dimension, we have seen that the strategy relies on the possibility of moving controlled-$Z$ operators from a generator $g_{s,k}$ to a neighbour, say $g_{s+1,k}$, level by level.
The generators that receive the controlled-$Z$ operators are then ignored for the purposes of error correction, while the Pauli generators form a distance-$3$ error-correcting code.
In order to achieve this result, we used two facts:
\begin{itemize}
    \item the Pauli operator $Z_{(s,k)}$ appears in both $g_{s,k}$ and $g_{s+1,k}$;
    \item all controlled-$Z$ operators appearing in a level-$k$ generator act only on levels lower than $k$.
\end{itemize}

All these features are still true in higher dimensions.
For simplicity, we focus now on the case of $\mathbb{Z}_4$ in two dimensions, but everything can be generalized to $\eta>2$ in the way discussed in Subsection~\ref{subsec:QEC-1d-generic-eta}.
Furthermore, the techniques used in Subsection~\ref{subsec:binary-stabilizer-2d} to find the stabilizer group in more than one dimension suggest that the same idea can be used for larger $\eta$ or in higher dimensions.
The following discussion involves only the first-level generators.
The zeroth-level generators are still there, but they are never affected by our manipulations.
Consider two neighbouring generators $g_{\mathbf{s},1}$ and $g_{\mathbf{t},1}$.
Denoting by L the link connecting the sites $\mathbf{s}$ and $\mathbf{t}$, and looking at Eq.~\eqref{eq:2d:generator-structure}, we can schematically write the two generators as
\begin{equation}
g_{\mathbf{s},1}=
Z_{(\text{L},1)}
\otimes\zeta_{\mathbf{s}}
\otimes C_{\mathbf{s},1},
\qquad
g_{\mathbf{t},1}=
Z_{(\text{L},1)}
\otimes\zeta_{\mathbf{t}}
\otimes C_{\mathbf{t},1},
\end{equation}
where $\zeta_{\mathbf{s}}$ and $\zeta_{\mathbf{t}}$ represent the $Z$ operators on the links different from L.
In the following, we need to attach targets to controlled-$Z$ operators.
By this, in general we mean that the Toffoli gate obtained by attaching a target on a qubit $c$ to a controlled-$Z$ operator $Z_{a,b}$ is the operator $X_{a,b}^c$.
Consider now the operator $U_{\mathbf{s}\rightarrow\mathbf{t}}$, a product of Toffoli operators obtained by attaching a target on the qubit $(L,1)$ to each controlled-$Z$ in $C_{\mathbf{s},1}$.
Using again the identity $X_{a,b}^cZ_cX_{a,b}^c=Z_cZ_{a,b}$, we can see that $U_{\mathbf{s}\rightarrow\mathbf{t}}$ moves the controlled-$Z$ operators from $g_{\mathbf{s},1}$ to $g_{\mathbf{t},1}$,
\begin{equation}
\begin{aligned}
U_{\mathbf{s}\rightarrow\mathbf{t}}
\,g_{\mathbf{s},1}\,
U_{\mathbf{s}\rightarrow\mathbf{t}}^{\dagger}&=
Z_{(\text{L},1)}
\otimes\zeta_{\mathbf{s}},\\
U_{\mathbf{s}\rightarrow\mathbf{t}}
\,g_{\mathbf{t},1}\,
U_{\mathbf{s}\rightarrow\mathbf{t}}^{\dagger}&=
Z_{(\text{L},1)}
\otimes\zeta_{\mathbf{s}}
\otimes C_{\mathbf{s},1}C_{\mathbf{t},1}.
\end{aligned}
\end{equation}
We can do the same with the other three neighbours of $\mathbf{t}$, exploiting the $Z$ operators in $\zeta_{\mathbf{t}}$.
Once this is done, the transformed $g_{\mathbf{t},1}$ carries all the controlled-$Z$ operators from itself and its neighbours, and all its neighbours are Pauli operators.
We call $g_{\mathbf{t},1}$ an accumulation point.

\begin{figure}
\includegraphics{Accumulation-code-2d}
\caption{
Accumulation points and first-level Pauli generators in the two-dimensional $\mathbb{Z}_4$ theory.
The dots are accumulation points, and the crosses are Pauli generators formed by four $Z$ operators, one per leg.
}
\label{fig:QEC:accumulation-code-2d}
\end{figure}

The task now is to choose an appropriate set of accumulation points.
In one dimension, choosing one site every three allows us to obtain a distance-three classical code.
In two dimensions, we can still obtain a distance-three classical code by choosing the set $\mathcal{A}$ in a careful way, as shown in Figure~\ref{fig:QEC:accumulation-code-2d}.
A dot represents an accumulation point, and the crosses of the same colour around it are its neighbours.
The sites are located at the dots and at the centers of the crosses.
After all the controlled-$Z$ have been accumulated on the accumulation points, each cross is the product of four $Z$ operators on the four first-level qubits of the links covered by the cross.

For the purposes of error correction, we need to only focus on the crosses and disregard the generators on the dots.
The crosses form a distance-three classical code, with a simple error syndrome.
First, notice that each cross has exactly one leg not overlapping with another cross.
The other three legs overlap with three different crosses.
Suppose now that there is a single bit flip on a first-level qubit, and we measure the generators represented by the crosses.
The only possible outcomes are that either a single cross gives $-1$, or two neighbouring crosses with an overlapping leg give $-1$.
In the first case, the error is located on the leg not overlapping with other crosses.
In the second case, the error is located on the leg shared by the two crosses.
We conclude that each single-qubit flip on the first level is uniquely identified by the syndrome and can be corrected.

This shows that, after the transformations $U_{\mathbf{s}\rightarrow\mathbf{t}}$, the first-level $g$ operators that are not on the accumulation points form a $[2V, \frac{6}{5}V, 3]$ code on the first level (assuming $L$ to be a multiple of $5$).
On the zeroth level, we still have the $g_{\mathbf{s},0}$ generators.
They form the $\mathbb{Z}_2$ gauge-covariant code of Reference~\cite{Spagnoli-Roggero-Wiebe_2024-05}, and they can be used to correct bit flips on the zeroth level.  Therefore, we can correct any single-qubit bit flip on the lattice by using only the gauge constraints.
If $L$ is not a multiple of 5, the links $(\mathbf{s},l)$ with $s_1$ or $s_2$ between about $L-5$ and $L-1$ should be taken care of in a similar way as at the end of Subsection~\ref{subsec:QEC-1d-eta2}.

\section{Application: gauge fixing}
\label{sec:gauge-fixing}

In this section, by proceeding in a similar way as in Section~\ref{sec:QEC}, we find a unitary transformation $U$ that removes the gauge redundancies and reduces the number of qubits for a simulation.
The idea is that we can turn an operator in the stabilizer group into a $Z$ operator acting on a single qubit.
The qubit is then frozen in the $\ket{0}$ state and can be discarded.
Again, we show how to do that in detail for $\mathbb{Z}_4$ in one dimension, but the method can be generalized to larger gauge groups and to higher dimensions.

For convenience, we rewrite the stabilizers of $\mathbb{Z}_4$ once again,
\begin{equation}
g_{s,0}=(-1)^sZ_{(s-1,0)}Z_sZ_{(s,0)},\qquad
g_{s,1}=Z_{(s-1,1)}Z_{(s,1)}Z_{\widetilde{(s-1,0)}\widetilde{(s,0)}},
\end{equation}
However, this time we choose a different, although equivalent, set of generators.
We keep all the generators $g_{s,0}$ and the generators $g_{s,1}$ with $s=1,\dots,V\m1$ while we replace
\begin{equation}
g_{0,1}\longrightarrow
g=
\prod_{s=0}^{V-1}g_{s,1}=
\prod_{s=0}^{V-1}Z_{\widetilde{(s-1,0)}\widetilde{(s,0)}}.
\end{equation}

First, we can remove the qubits associated with the sites by applying the transformation
\begin{equation}
U_0=\prod_sU_{s,0},\qquad
U_{s,0}=X_{(s-1,0)}^sX_{(s,0)}^s(X_s)^s,
\label{eq:gauge-fixing:U0-transformation}
\end{equation}
acting only on the zeroth level.
A $U_{s,0}$ operator commutes with both $U_{t,0}$ and $g_{t,0}$ if $t\ne s$, with $g$, and with $g_{t,1}$ for any $t$.
Then, using the identities $X_a^bZ_bX_a^b=Z_aZ_b$ and $Z_a^2=\mathbb{1}$, we have
\begin{equation}
U_0g_{s,0}U_0^{\dagger}=U_{s,0}g_{s,0}U_{s,0}^{\dagger}=Z_s.
\label{eq:gauge-fixing:transformed-generators-under-U0}
\end{equation}
As the generators $g_{s,0}$ are turned into a $Z$ operator acting on the single qubit of site $s$, these qubit are frozen in the $\ket{0}$ state after the transformation and can be discarded.

We are now left with only the link qubits and the constraints $g_{s,1}$ and $g$.
Then, we apply the transformation
\begin{equation}
U_1'=
\prod_{s=1}^{V-1}X_{(s-1,1)}^{(s,1)}=
X_{(0,1)}^{(1,1)}X_{(1,1)}^{(2,1)}\cdots X_{(V-2,1)}^{(V-1,1)},
\end{equation}
where again the order of multiplication matters.
This transformation acts only on the first-level qubits, and thus it commutes with $g$.
On the other generators, its effect is to remove one of the two $Z$ operators acting on the first level,
\begin{equation}
U_1'g_{s,1}{U_1'}^{\dagger}=
Z_{(s,1)}Z_{\widetilde{(s-1,0)}\widetilde{(s,0)}}=
g^{(1)}_{s,1}.
\end{equation}

For every $g^{(1)}_{s,1}$, consider the Toffoli operator $X_{\widetilde{(s-1,0)}\widetilde{(s,0)}}^{(s,1)}$ obtained by attaching a target on qubit $(s,1)$ to the controlled-$Z$ operator $Z_{\widetilde{(s-1,0)}\widetilde{(s,0)}}$.
We use these Toffoli operators to remove the controlled-$Z$ operators applying
\begin{equation}
U_1''=
\prod_{s=1}^{V-1}X_{\widetilde{(s-1,0)}\widetilde{(s,0)}}^{(s,1)},
\end{equation}
whose action is
\begin{equation}
g^{(2)}_{s,1}=U_1''g_{s,1}'{U_1''}^{\dagger}=
Z_{(s,1)}.
\label{eq:gauge-fixing:Toffoli-product}
\end{equation}
Therefore, all the qubits $(s,1)$ with $s=1,\dots,V-1$ are frozen in the $\ket{0}$ state and can be discarded.
The Toffoli operators in Eq.~\eqref{eq:gauge-fixing:Toffoli-product} are similar to the ones in Eq.~\eqref{eq:QEC:change-of-basis}.
The difference is that now each generator $g_{s,1}'$ carries only one $Z$ operator on the qubit $(s,1)$, as a consequence of applying the transformation $U_1'$.
The $Z_{(s,1)}$ is not shared among different generators, in the sense that it appears only in $g_{s,1}^{(1)}$.
This is why, here, the controlled-$Z$ operators are removed, and not moved, by the Toffoli operators.
Before applying $U_1'$, each $Z_{(s,1)}$ operator on the first level is shared between the two generators $g_{s,1}$ and $g_{s+1,1}$, which is why in Subsection~\ref{sec:QEC} the controlled-$Z$ operators are moved from one generator to another.

We summarize the final situation after applying the full transformation $U=U_1''U_1'U_0$.
Apart from the lower link qubits $(s,0)$, only the qubit $(0,1)$ is left, and it is now unconstrained.
The only remaining constraint is $g$, which commutes with all the transformations.
More than half of the eigenvalues of this operator are 1, which implies that more than half of the states in the new, total Hilbert space are physical.
Therefore, we cannot remove any more qubits.
If we did so, the total Hilbert space would no longer be large enough to host the physical space.

We turn now our attention to the gauge-invariant operators.
Their form after the transformation is obtained by conjugation, as we did for the stabilizer generators.
Furthermore, all the operators that, after the transformation, act on the discarded qubits can be removed as they must act trivially.
Since the calculations are somewhat tedious, we only provide the final results.
The hopping terms preserve their locality.
Setting $B_s=\psi_{s+1}^{\dagger}Q_s\psi_s$, we find
\begin{equation}
\begin{aligned}
UB_0U^{\dagger}&=\mathbb{P}_{V-1,0}^-Q_0\mathbb{P}_{0,1}^-\\
UB_sU^{\dagger}&=\mathbb{P}_{s-1,s}^-X_{(s,0)}\mathbb{P}_{s,s+1}^-&\quad&\text{if $s$ is even}\\
UB_sU^{\dagger}&=\mathbb{P}_{s-1,s}^+X_{(s,0)}\mathbb{P}_{s,s+1}^+&\quad&\text{if $s$ is odd}
\end{aligned}
\qquad\qquad
\mathbb{P}_{s,t}^{\pm}=\frac{1}{2}\left(\mathbb{1}\pm Z_{(s,0)}Z_{(t,0)}\right).
\end{equation}
The electric terms loose their locality, acquiring a chain of controlled-$Z$ operators connecting them to the link 0,
\begin{equation}
\begin{aligned}
UP_0U^{\dagger}=P_0,\qquad
UP_sU^{\dagger}=S_{(s,0)}\prod_{t=1}^sZ_{\widetilde{(t-1,0)}\widetilde{(t,0)}}.
\end{aligned}
\end{equation}
Notice that the transformation $U_0$ alone commutes with the $P$ operators and preserves the locality of the hopping terms, allowing us to remove the site qubits without spoiling locality.

If $\eta>2$, we can proceed in order from the highest level to the lowest level, similarly to what is done in Subsection~\ref{subsec:QEC-1d-generic-eta} to build the error correcting code for generic $\eta$.
Level by level, we can repeat all the steps described in this subsection.
First, we choose a different, although equivalent, set of generators, replacing
\begin{equation}
g_{0,k}\longrightarrow
g_k=
\prod_{s=0}^{V-1}g_{s,k}=
\prod_{s=0}^{V-1}Z_{\widetilde{(s-1,k-1)}\widetilde{(s,k-1)}},\qquad
k=1,\dots,\eta-1.
\end{equation}
Second, we define $U_k=U_k''U_k'$, with
\begin{equation}
\begin{aligned}
U_k'&=
\prod_{s=1}^{V-1}X_{(s-1,k)}^{(s,k)}=
X_{(0,k)}^{(1,k)}X_{(1,k)}^{(2,k)}\cdots X_{(V-2,k)}^{(V-1,k)},\\
U_k''&=
\prod_{s=1}^{V-1}X_{\widetilde{(s-1,k-1)}\widetilde{(s,k-1)}}^{(s,k)}.
\end{aligned}
\end{equation}
Finally, we put all the transformations together in the right order,
\begin{equation}
U_{(\eta)}=\prod_{k=0}^{\eta-1}U_k=U_0U_1\cdots U_{\eta-1},
\end{equation}
where we recall that $U_0$ has a different form from $U_k$ with $k\ge1$.
The final result on the generators is
\begin{equation}
\begin{aligned}
U_{(\eta)}g_{s,0}U_{(\eta)}^{\dagger}&=Z_s&&\forall s\\
U_{(\eta)}g_{s,k}U_{(\eta)}^{\dagger}&=Z_{(s,k)}&&
s=1,\dots,V-1,\quad k=1,\dots,\eta-1,
\end{aligned}
\end{equation}
which allows us to remove all the site qubits and the link qubits on the higher levels, with the exception of link 0.
The $g_k$ generators are modified by $U$, but their form is not relevant for the purposes of the present manuscript.

An advantage of the method we propose here to remove gauge redundancies is that it is generalizable to higher dimensions. First, one has to find the $d$-dimensional transformation $U_k'$, which will transform the generators $g_{\mathbf{s},k}$ into operators with a single $Z$ operator not shared with other generators. Then, the transformation $U_k''$ needs to be found, by attaching targets on the correct qubits to the controlled-$Z$ operators in the generators.

\section{Conclusions}
\label{sec:conclusions}

In this work we have further developed the connection between gauge theories and quantum error correction at a practical level.
Specifically, we have introduced the binary Gauss stabilizers for the gauge-invariant subspace of $\mathbb{Z}_{2^\eta}$ gauge theories, an alternative to the Gauss operators commonly encountered in the literature of lattice gauge theories.
The new stabilizer set is built from products of Pauli $Z$, controlled-$Z$ and multicontrolled-$Z$ operators, all of which have only $\pm1$ eigenvalues.
We have also shown explicitly two important applications of this stabilizer group, error correction and gauge fixing, and we believe many others are to be discovered.

In order to exploit the gauge redundancies for quantum error correction, in Section~\ref{sec:QEC} we have provided a change of basis that turns a subgroup of the stabilizer set into a Pauli stabilizer group with only $Z$ operators.
The obtained Pauli stabilizers forms a classical code with distance three, which can correct any single-qubit bit-flip error using only the qubits required for the lattice system.
Thus, after the change of basis, one can focus only on the Pauli subgroup to perform error correction, and ignore the remaining non-Pauli generators.
This means that the non-Pauli generators are still there, but they do not play any active role, at least in the context of error correction.
The codespace stabilized by the Pauli subgroup strictly contains the gauge-invariant subspace.
This follows immediately from the fact that the gauge-invariant subspace satisfies both the Pauli and the non-Pauli constraints, while the codespace satisfies only the Pauli ones.
Clearly, the Pauli subgroup does not correct every gauge-violating error (only single-qubit errors), but the remaining non-Pauli ones may still be employed to further detect or prevent gauge violations, for instance through energy penalties or dynamical decoupling schemes (see e.g.~\cite{PhysRevX.3.041018,lamm2020suppressingcoherentgaugedrift,PRXQuantum.2.040311}), a possibility we leave for future investigations.
Furthermore, our work could establish a general strategy to include symmetries in the design of problem-specific error-correcting codes.

The method of gauge fixing discussed in Section~\ref{sec:gauge-fixing} is technically very similar to the method of error correction.
A suitable change of basis turns a subset of the binary Gauss generators into single $Z$ operators.
These single-$Z$ generators freeze the qubits they act on, making them disposable.
The gauge-fixing method we have introduced should be compared to existing approaches, such as the axial gauge, to investigate possible advantages~\cite{Farrell-Chernyshev-Powell-Zemlevskiy-Illa-Savage_22-07,Farrell-Illa-Ciavarella-Savage_23-08}.

The gauge constraints of a $\mathbb{Z}_N$ theory, being diagonal in the computational basis, allow us to correct only off-diagonal errors, such as bit-flips.
Thus, as in the $\mathbb{Z}_2$ case, handling phase-flip errors inevitably requires additional resources.
A possible way to obtain a full quantum error-correcting code is by concatenation with, for instance, a phase-flip repetition code, as done in previous works on this topic~\cite{rajput2023quantum,Spagnoli-Roggero-Wiebe_2024-05,Spagnoli:2026qni}.
Another viable approach is to identify the logical degrees of freedom in the codespace stabilized by the Pauli subgroup, and map these degrees of freedom to a quantum code.
For example, consider three qubits in a repetition code in the lower part of Figure~\ref{fig:QEC:accumulating-CZ-1d}, namely after the change of basis that provides the Pauli subgroup.
We could replace these three qubits with, say, seven qubits in the Steane code, or whichever code we deem more suitable.
Thus, an important connection is that we can view our strategy for error correction as a partial gauge fixing. Once we turn some generators into Pauli operators, it is easy to find the logical degrees of freedom, and considering the logical degrees of freedom is equivalent to removing the constraints associated with the Pauli operators.
It is a partial gauge fixing in that it still leaves a number of constraints growing with the system size.

Another direction worth exploring is whether our framework can be leveraged to design more efficient classical-shadow tomography tailored to gauge theories. Classical-shadow protocols are based on random unitary transformations followed by measurements, typically drawn from the Pauli group.
However, for a gauge theory, the number of relevant (gauge invariant) Pauli operators is vanishingly small compared to the total number of Pauli operators on the lattice. Recent work in Ref.~\cite{Bringewatt2026classicalshadows} has tackled this problem in a $\mathbb{Z}_2$ pure gauge theory by exploiting the known duality to the Ising model to directly identify the subset of gauge-invariant operators to sample from. In the language of quantum error correction used in our work, this corresponds to performing shadow tomography on the basis of logical operators. When the stabilized subspace matches exactly the gauge-invariant space, as e.g. in $\mathbb{Z}_2$ theories in any dimension with and without matter~\cite{Spagnoli-Roggero-Wiebe_2024-05}, this allows for the overhead discussed above to be completely removed. On the other hand, both for $\mathbb{Z}_N$ theories for $N$ prime coupled to fermionic matter~\cite{Spagnoli:2026qni}, or in the present case of $\mathbb{Z}_{2^\eta}$ gauge theories with $\eta>1$, the Pauli-stabilized subspace is larger than the gauge-invariant space.
However, by considering only the logical Pauli operators of this subspace, we may still significantly reduce the set of Pauli operators to sample from when implementing classical shadows, reducing the cost of the algorithm.

On a more speculative level, a more fundamental property of these lattice gauge theories seem to emerge: if we consider the standard Gauss operators from Eq.~\eqref{eq:Gauss_operators} we notice that they are manifestly Clifford operations only for $N=2$ and $N=4$, in any spatial dimension.
The binary Gauss stabilizers introduced in Subsec.~\ref{subsec:binary-stabilizer-1d} for the one-dimensional case are all Clifford operations.
This shows that the physical subspace can be described exactly using only Clifford stabilizers in one dimension for any $N=2^{\eta}$.
This does not appear to be the case in higher dimensions: for $\eta>2$ the stabilizer set we construct from our procedure (see also App.~\ref{app:binary-stabilizer-2d-generic-eta} for an explicit calculation in $d=2$ dimensions) always contain non-Clifford operations like the Toffoli gate. It still remains to be formally shown that non-Clifford operations are actually required for $d\geq2$ and $\eta>2$, we conjecture that it is. This is motivated by the empirical evidence that $\mathbb{Z}_N$ lattice gauge theories show a markedly different phase diagram depending on the value of $d$ and $N$: for $d=1$ and any $N$ the theory has only two different phases, a confined and a deconfined phase, while for $d>1$ and a critical value of $N=N_c<8$  a $U(1)$ phase is found between the two~\cite{Elitzur:1979uv,Horn:1979fy,Ukawa:1979yv,Kogut:1980fn,Zarei:2020lrm,Nyhegn:2020cxu,Giansiracusa:2025hfj}.
A more deep exploration of the relationship between the properties of lattice gauge theories and the structure of their corresponding stabilizer set will be the subject of a future publication.

\section{Acknowledgments}
We want to thank P. Hauke, E. Rico, X. Yao and Y. Omar for discussions on the topics covered in this work.

M. T. thanks the support from
CeFEMA -- Centre of Physics and Engineering of Advanced Materials,
through contracts
UID/04540/2025 [https://doi.org/10.54499/UID/PRR/04540/2025],
UID/PRR/04540/2025 [https://doi.org/10.54499/UID/PRR/04540/2025 ], and
\newline
UID/PRR2/04540/2025 [https://doi.org/10.54499/UID/PRR2/04540/2025]
with the Portuguese Agência para a Investigação e Inovação AI2,
and from LaPMET -- Laboratory of Physics for Materials and Emerging Technologies,
through contract
LA/P/0095/2020 [https://doi.org/10.54499/LA/P/0095/2020]
with the Portuguese Agência para a Investigação e Inovação AI2,
as well as from projects QuantHEP and HQCC supported by
the EU QuantERA ERA-NET Cofund in Quantum Technologies and by
FCT -- Funda\c{c}\~{a}o para a Ci\^{e}ncia e a Tecnologia (Portugal) (QuantERA/004/2021).
M. T. also thanks FCT for support through Grant No. PRT/BD/154668/2022.

\appendix

\section{Number of physical states}
\label{app:number_physical_states}

The number of physical states is given by Eq.~\eqref{number-of-physical-states}. In our case, we will consider $V$ to be even, since we have staggered fermions, and $N=2^\eta$. Then, let us rewrite the expression as:
\begin{equation}
D=N^{E-V}\sum_{c=0}^{N-1}\left(2+2\cos\left(\frac{2\pi c}{N}\right)\right)^{V/2}=N^{E-V}2^V\sum_{c=0}^{N-1}\cos\left(\frac{\pi c}{N}\right)^V
\end{equation}
We can now expand the sum using the identity
\begin{equation}
\cos\left(\frac{\pi c}{N}\right)^V=\frac{1}{2^{V}}\sum_{m=0}^{V}\binom{V}{m}\cos\left((V-2m)\frac{\pi c}{N}\right)=\frac{1}{2^{V}}\sum_{k=-V/2}^{V/2}\binom{V}{V/2-k}\cos\left(2k\frac{\pi c}{N}\right)
\end{equation}
We can now perform the sum over $c$ at $m$ fixed using
\begin{equation}
\sum_{c=0}^{N-1}\cos\left(2k\frac{\pi c}{N}\right)=\bigg\{\begin{matrix}N&\text{if}\quad k\!\!\!\mod N=0\\0&\text{otherwise}\end{matrix}\;.
\end{equation}
The sum of the cosine terms then selects $k$ indices that are integer multiples of $N$. But then we have
\begin{equation}
D=N^{E-V+1}\sum_{r\in \mathbb{Z}} \binom{V}{V/2 + rN}\;,
\end{equation}
where the binomial coefficients are defined to be $\binom{V}{j}=0$ for $j\notin [0,V]$.

The first factor $N^{E-V+1}$, for $N=2^\eta$, will always be a power of $2$, and due to the fact that the remaining sum has to result in a positive integer, in order to prove that $D$ is a power of $2$, it is enough to prove that the sum is a power of $2$. For $V=2$ and $N\ge 2$, the sum has only one non-zero term, which is the one with $r=0$, for which the binomial coefficient is just $2$. In this case, $D$ is a power of $2$.

For any $V>2$ and $N>V/2$, again the sum has only one non-zero term, for $r=0$. However, for $V>2$ there exists a prime factor $p$ such that $V/2 \le p \le V$, which divides $\binom{V}{V/2}$. This means that $D$ is not a power of $2$ for $V>2$ if $N>V/2$.

In the other cases, where $V>2$ and $N\le V/2$, it is not straightforward to prove weather $D$ is a power of $2$ from this formula.
However, we run a numerical check, up to $V=600$, for every $N\le V/2$, and we found that no combination $V,N$ exists such that $D$ is a power of $2$.
Thus, we conjecture that $V=2$ is the only case where $D$ is a power of $2$ with $N>2$.

\section{Lemma 1}
\label{app:lemma1}

\begin{lemma}
Let $a_k,b_k,\alpha_k\in\{0,1\}$ for $k=0,\dots,\eta-1$. Then
\begin{equation}
\begin{cases}
a_k\oplus b_k\oplus \alpha_k=0\\[1mm]
\alpha_{k+1}=a_k\alpha_k
\end{cases}
\forall k=0,\dots,\eta-1
\quad\iff\quad
\begin{cases}
a_k\oplus b_k\oplus \alpha_k=0\\[1mm]
\displaystyle
\alpha_k=\alpha_0\oplus\left(\bigoplus_{l=0}^{k-1}\bar{a}_lb_l\right)
\end{cases}
\forall k=0,\dots,\eta-1.
\end{equation}
\label{lemma:ck-equivalence}
\end{lemma}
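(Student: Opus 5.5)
The plan is to observe that the first equation, $a_k\oplus b_k\oplus\alpha_k=0$, is common to both systems. Under it, the two descriptions of $\alpha_{k+1}$ turn out to be the same one-step recursion written differently. The closed form on the right-hand side is then just that recursion unrolled.

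\textbf{Step 1 (local identity).} Assume $a_k\oplus b_k\oplus\alpha_k=0$, i.e.\ $b_k=a_k\oplus\alpha_k$. Then, using $\bar a_k a_k=0$,
\begin{equation}
\bar a_k b_k=\bar a_k(a_k\oplus\alpha_k)=\bar a_k a_k\oplus\bar a_k\alpha_k=\bar a_k\alpha_k,
\end{equation}
and therefore
\begin{equation}
\alpha_k\oplus\bar a_k b_k=\alpha_k\oplus\bar a_k\alpha_k=\alpha_k(1\oplus\bar a_k)=a_k\alpha_k .
\end{equation}
So, for each $k$ at which the parity constraint holds, $\alpha_{k+1}=a_k\alpha_k$ is equivalent to $\alpha_{k+1}=\alpha_k\oplus\bar a_k b_k$.

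\textbf{Step 2 (forward direction).} Assume the left-hand system. By Step 1, $\alpha_{l+1}=\alpha_l\oplus\bar a_l b_l$ for every $l$. An induction on $k$, with trivial base case $k=0$ (empty XOR), gives $\alpha_k=\alpha_0\oplus\bigoplus_{l=0}^{k-1}\bar a_l b_l$.

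\textbf{Step 3 (converse).} Assume the right-hand system. XOR-ing the closed forms for $k+1$ and $k$ gives $\alpha_{k+1}\oplus\alpha_k=\bar a_k b_k$, for $k+1\le\eta-1$. Applying Step 1 backwards, using the parity constraint at index $k$, yields $\alpha_{k+1}=a_k\alpha_k$.

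The only point requiring care, and the closest thing to an obstacle, is the range of indices. The left system formally refers to $\alpha_\eta$, while the closed form is only stated up to $k=\eta-1$. I would treat $\alpha_\eta$ as defined by either formula, since Step 1 shows the two agree there too, or equivalently read the recursion only for $k+1\le\eta-1$, which is how it is used in Eq.~\eqref{eq:1d:carries}. With that convention, Steps 1--3 give the equivalence.
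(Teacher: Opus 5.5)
Your proof is correct and follows essentially the same route as the paper: under the parity constraint you show that $\alpha_{k+1}=a_k\alpha_k$ is equivalent to $\alpha_{k+1}\oplus\alpha_k=\bar a_k b_k$, telescope this to get the closed form, and reverse the local step for the converse. Your remark that the left-hand system formally involves $\alpha_\eta$ while the closed form stops at $k=\eta-1$ handles a boundary issue that the paper's proof passes over silently.
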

\begin{proof}
Starting from $\alpha_{k+1}=a_k\alpha_k$, and using $\alpha_k=a_k\oplus b_k$,
we have
\begin{equation}
\alpha_{k+1}\oplus \alpha_k=a_k\alpha_k\oplus \alpha_k=(a_k\oplus1)\alpha_k=
\bar{a}_k(a_k\oplus b_k)=\bar{a}_kb_k.
\end{equation}
Then, since $\alpha_l\oplus \alpha_l=0$, we have
\begin{equation}
\alpha_k=\left(\bigoplus_{l=0}^{k-1}\left(\alpha_l\oplus \alpha_l\right)\right)\oplus \alpha_k=
\alpha_0\oplus\left(\bigoplus_{l=0}^{k-1}\left(\alpha_l\oplus \alpha_{l+1}\right)\right)=
\alpha_0\oplus\left(\bigoplus_{l=0}^{k-1}\bar{a}_lb_l\right).
\end{equation}

Starting from $\alpha_k=\alpha_0\oplus\left(\bigoplus_{l=0}^{k-1}\bar{a}_lb_l\right)$, we find again $\alpha_{k+1}\oplus \alpha_k=\bar{a}_kb_k$.
Then, using $b_k=a_k\oplus\alpha_k$, we obtain $\alpha_{k+1}=a_k\alpha_k$.
\end{proof}

\section{Generators for $\mathbb{Z}_{{2^\eta}}$ in two dimensions}
\label{app:binary-stabilizer-2d-generic-eta}
To find the generator $g_{\mathbf{s},k}$ with generic $k>1$, we follow the same procedure used to determine $g_{\mathbf{s},1}$ in Section~\ref{subsec:binary-stabilizer-2d} of the main text.
We start by first finding circuits computing the carry bits $\alpha_k$ and $\beta_k$.
Second, we conjugate a product of $Z$ operators by these circuits and simplify what we can.
The case with $\eta=3$ is worked out explicitly in Figure~\ref{fig:app:Z8-stabilizer-circuit}.

\begin{figure}[b]
\includegraphics{General-conjugation-identity}
\caption{
Derivation of the identity $X_{a_1,\dots,a_n}^bZ_{b,b_1,\dots,b_m}X_{a_1,\dots,a_n}^b=Z_{b,b_1,\dots,b_m}Z_{a_1,\dots,a_n,b_1,\dots,b_m}$.
The gray squares and circles represent arbitrary multicontrols.
}
\label{fig:app:general-conjugation-identity}
\end{figure}

We need two registers of ancillary qubits $(\alpha,k)$ and $(\beta,k)$ for the carry bits, with $k=1,\dots,\eta\m1$.
Let us recall the recursion relations in Eq.~\eqref{eq:2d:carries}
\begin{equation}
\alpha_k=\alpha_{k-1}a_{k-1}
\oplus\alpha_{k-1}b_{k-1}
\oplus a_{k-1}b_{k-1}\qquad
\beta_k=\beta_{k-1}c_{k-1}
\oplus\beta_{k-1}d_{k-1}
\oplus c_{k-1}d_{k-1},
\end{equation}
where $\alpha_0=0$, $\beta_0=f_{\mathbf{s}}$ for even $\abs{\mathbf{s}}$ and $\beta_0=1-f_{\mathbf{s}}$ for odd $\abs{\mathbf{s}}$.
Given the action of a Toffoli gate described in Eq.~\eqref{eq:2d:Toffoli-action}, and denoting with $(\alpha, k)$ the qubit storing the $k$-th bit of $\alpha$, $\alpha_k$, we can compute $\alpha_k$ from $\alpha_{k-1}$ by using the unitary $U_{k-1,k}^{(\alpha)}$ such that $U_{k-1,k}^{(\alpha)} \ket{a_{k-1}}\ket{b_{k-1}} \ket{\alpha_{k-1}}\ket{0} = \ket{a_{k-1}}\ket{b_{k-1}}\ket{\alpha_{k-1}}\ket{\alpha_k}$. This unitary can be computed using three Toffoli gates as follows:
\begin{equation}
U_{k-1,k}^{(\alpha)}=
X_{(\alpha,k-1),(A,k-1)}^{(\alpha,k)}
X_{(\alpha,k-1),(B,k-1)}^{(\alpha,k)}
X_{(A,k-1),(B,k-1)}^{(\alpha,k)}.
\end{equation}
Following the notation in the main text, the registers $A$ and $B$ encode the east and north links attached to a site $\mathbf{s}$, $C$ and $D$ encode the west and south links.
Here we focus on a site $\mathbf{s}$ with even $\abs{\mathbf{s}}$.
We will mention at the end the due modifications for odd $\abs{\mathbf{s}}$.
Since $\alpha_0=0$, we do not need to employ an actual qubit to store it, thus
\begin{equation}
U_{0,1}^{(\alpha)}=
X_{(A,0),(B,0)}^{(\alpha,1)}.
\end{equation}
Similarly, for $\beta_k$ we have
\begin{equation}
U_{k-1,k}^{(\beta)}=
X_{(\beta,k-1),(C,k-1)}^{(\beta,k)}
X_{(\beta,k-1),(D,k-1)}^{(\beta,k)}
X_{(C,k-1),(D,k-1)}^{(\beta,k)},
\end{equation}
and, since $\beta_0=f_{\mathbf{s}}$,
\begin{equation}
U_{0,1}^{(\beta)}=
X_{\mathbf{s},(C,0)}^{(\beta,1)}
X_{\mathbf{s},(D,0)}^{(\beta,1)}
X_{(C,0),(D,0)}^{(\beta,1)}.
\end{equation}
The full circuit to compute $\alpha_k$ and $\beta_k$ can be expressed as $U_k=U_k^{(\alpha)}U_k^{(\beta)}$, with
\begin{equation}
\begin{aligned}
U_k^{(\alpha)}&=
\prod_{l=k}^1U_{l-1,l}^{(\alpha)}=
U_{k-1,k}^{(\alpha)}
U_{k-2,k-1}^{(\alpha)}
\cdots
U_{0,1}^{(\alpha)},\\
U_k^{(\beta)}&=
\prod_{l=k}^1U_{l-1,l}^{(\beta)}=
U_{k-1,k}^{(\beta)}
U_{k-2,k-1}^{(\beta)}
\cdots
U_{0,1}^{(\beta)}.
\end{aligned}
\end{equation}
The circuits $U_2^{(\alpha)}$ and $U_2^{(\beta)}$ are shown on the left of the $Z$ operators in the left-hand side of Figure~\ref{fig:app:Z8-stabilizer-circuit}.
Following the discussion in Section~\ref{subsec:binary-stabilizer-2d}, the intermediate form of the generator is
\begin{equation}
\begin{aligned}
g_{\mathbf{s},k}'&=
U_k^{\dagger}
Z_{(A,k)}
Z_{(B,k)}
Z_{(C,k)}
Z_{(D,k)}
Z_{(\alpha,k)}
Z_{(\beta,k)}
U_k\\
&=
Z_{(A,k)}
Z_{(B,k)}
Z_{(C,k)}
Z_{(D,k)}
U_k^{\dagger}
Z_{(\alpha,k)}
Z_{(\beta,k)}
U_k\\
&=
Z_{(A,k)}
Z_{(B,k)}
Z_{(C,k)}
Z_{(D,k)}\left(
{U^{(\alpha)}}_k^{\dagger}
Z_{(\alpha,k)}
U^{(\alpha)}_k\right)\left(
{U^{(\beta)}}_k^{\dagger}
Z_{(\beta,k)}
U^{(\beta)}_k\right)\\
\end{aligned}
\label{eq:app:generic-generator-with-ancillae}
\end{equation}
The intermediate $g_{\mathbf{s},2}'$ is shown on the left-hand side of Figure~\ref{fig:app:Z8-stabilizer-circuit}.

What we need to do now is to simplify this operator by working out the conjugations of $Z$, controlled-$Z$ and multicontrolled-$Z$ by Toffoli gates.
We can focus on the $\alpha$ and $\beta$ parts separately, as is clear from Eq.~\eqref{eq:app:generic-generator-with-ancillae}.
To do this, we need to generalize the identity $X_{a,b}^cZ_cX_{a,b}^c=Z_cZ_{a,b}$ shown in Figure~\ref{fig:2d:conjugation-identity} to deal with controlled-$Z$ and multicontrolled-$Z$ operators.
First, notice that we can add arbitrary controls to the Toffoli gates,
\begin{equation}
X_{a_1,\dots,a_n}^bZ_bX_{a_1,\dots,a_n}^b=Z_bZ_{a_1,\dots,a_n}
\label{eq:app:multi-Toffoli-Z-conjugation}
\end{equation}
Then, as shown through the circuit identities in Figure~\ref{fig:app:general-conjugation-identity}, we can further generalize the formula as
\begin{equation}
X_{a_1,\dots,a_n}^bZ_{b,b_1,\dots,b_m}X_{a_1,\dots,a_n}^b=Z_{b,b_1,\dots,b_m}Z_{a_1,\dots,a_n,b_1,\dots,b_m}.
\end{equation}
The gray squares and circles in the Figure represent arbitrary multicontrols on the set of qubits they are depicted on, and we use two different symbols to denote two, possibly different, sets of controls.
In the second and third pieces of the Figure, the boxes are operators controlled on the qubits below the boxes.
The first and last equalities are common relations between controlled operators.
The second equality follows directly from Eq.~\eqref{eq:app:multi-Toffoli-Z-conjugation}.
An example of the general formula is shown in Figure~\ref{fig:app:general-conjugation-identity-example}.
A mnemonic rule for the formula is that the conjugated operator $Z_{b,b_1,\dots,b_m}$ remains, and gets multiplied by an operator constructed by merging together the indices of $Z_{b,b_1,\dots,b_m}$ and $X_{a_1,\dots,a_n}^b$, where the subscript index $b$ is "contracted" with the superscript index $b$, meaning that the resulting operator will have both $a_i$ and $b_i$ indices but not $b$.

\begin{figure}
\includegraphics{General-conjugation-identity-example}
\caption{Circuital representation of the identity $X_{a_1,a_2}^bZ_{b,b_1}X_{a_1,a_2}^b=Z_{b,b_1}Z_{a_1,a_2,b_1}$.}
\label{fig:app:general-conjugation-identity-example}
\end{figure}

\begin{figure}
\includegraphics{Z8-2d}
\caption{
Second-level generator for the stabilizer group of the $\mathbb{Z}_8$ theory in two dimensions.
The ancillary qubits for $\alpha$ and $\beta$ are initially set in the $\ket{0}$ state.
The operator on the left-hand side is $g_{\mathbf{s},2}'$, and the one on the right-hand side is $g_{\mathbf{s},2}$. Of course, the equality holds only if the ancillary qubits start in the zero state.
}
\label{fig:app:Z8-stabilizer-circuit}
\end{figure}

We now have all the tools we need to simplify Eq.~\eqref{eq:app:generic-generator-with-ancillae}.
We start with
\begin{equation}
\begin{aligned}
\left(U_k^{(\alpha)}\right)^{\dagger}Z_{(\alpha,k)}U_k^{(\alpha)}=
\left(U_k^{(\alpha)}\right)^{\dagger}Z_{(\alpha,k)}\quad
&\Ualpha{k-1}{k}\\
&\Ualpha{k-2}{k-1}\\
&\Ualpha{k-3}{k-2}\\
\vdots\\
&\Ualpha{1}{2}\\
&X_{(A,0),(B,0)}^{(\alpha,1)}.
\label{eq:app:extended-conjugation}
\end{aligned}
\end{equation}
After conjugating the $Z$ operator with the three Toffoli in the first line, we get
\begin{equation}
\begin{aligned}
\left(U_k^{(\alpha)}\right)^{\dagger}Z_{(\alpha,k)}U_k^{(\alpha)}=
\left(U_{k-1}^{(\alpha)}\right)^{\dagger}\quad
&\mathcal{Z}_{\Lambda_2^{(\alpha)}}\\
&\Ualpha{k-2}{k-1}\\
&\Ualpha{k-3}{k-2}\\
\vdots\\
&\Ualpha{1}{2}\\
&X_{(A,0),(B,0)}^{(\alpha,1)}\\
&Z_{(\alpha,k)}\mathcal{Z}_{\Gamma_2^{(\alpha)}},
\end{aligned}
\end{equation}
where $\mathcal{Z}_{\Lambda_2^{(\alpha)}}=Z_{(\alpha,k-1),(A,k-1)}Z_{(\alpha,k-1),(B,k-1)}$ and $\mathcal{Z}_{\Gamma_2^{(\alpha)}}=Z_{(A,k-1),(B,k-1)}$.
Next, we need to conjugate $\mathcal{Z}_{\Lambda_2^{(\alpha)}}$, which does not commute only with the second line,
\begin{equation}
\begin{aligned}
\left(U_k^{(\alpha)}\right)^{\dagger}Z_{(\alpha,k)}U_k^{(\alpha)}=
\left(U_{k-2}^{(\alpha)}\right)^{\dagger}\quad
&\mathcal{Z}_{\Lambda_3^{(\alpha)}}\\
&\Ualpha{k-3}{k-2}\\
\vdots\\
&\Ualpha{1}{2}\\
&X_{(A,0),(B,0)}^{(\alpha,1)}\\
&Z_{(\alpha,k)}\mathcal{Z}_{\Lambda_2^{(\alpha)}}\mathcal{Z}_{\Gamma_2^{(\alpha)}}\mathcal{Z}_{\Gamma_3^{(\alpha)}},
\end{aligned}
\end{equation}
where $\mathcal{Z}_{\Lambda_3^{(\alpha)}}=Z_{(\alpha,k-2),(A,k-2),(A,k-1)}Z_{(\alpha,k-2),(A,k-2),(B,k-1)}Z_{(\alpha,k-2),(B,k-2),(A,k-1)}Z_{(\alpha,k-2),(B,k-2),(B,k-1)}$ and $\mathcal{Z}_{\Gamma_3^{(\alpha)}}=Z_{(A,k-2),(B,k-2),(A,k-1)}Z_{(A,k-2),(B,k-2),(B,k-1)}$.
We need to proceed like this line by line.

We now describe a recursive algorithm to build all the operators that appear.
In the following, we denote a list of qubit labels by $|a,b,c,\dots|$, where the order does not matter.
Start with the single-entry list $\sigma_1^{(1)}=|(\alpha,k)|$ and define the set $\Lambda_1^{(\alpha)}=\{\sigma_1^{(1)}\}$. When conjugating $Z_{(\alpha, k)}$, for each Toffoli operator $T$ inside $U_k^{(\alpha)}$ with target on the $(\alpha,k)$ qubit, take the list obtained by adding the controls of $T$ to $\sigma_1^{(1)}$ and by removing $(\alpha,k)$.
Since there are $3$ Toffoli gates in $U_k^{(\alpha)}$ with target $(\alpha, k)$, this procedure defines three two-entry lists
\begin{equation}
\sigma_1^{(2)}=|(\alpha,k\m1),(A,k\m1)|\qquad
\sigma_2^{(2)}=|(\alpha,k\m1),(B,k\m1)|\qquad
\sigma_3^{(2)}=|(A,k\m1),(B,k\m1)|.
\end{equation}
The first two ones have an entry on the ancillary register, and we put them in the set $\Lambda_2^{(\alpha)}=\{\sigma_1^{(2)},\sigma_2^{(2)}\}$.
The third one has entries only on the links $A$ and $B$, so we put it in the set $\Gamma_2^{(\alpha)}=\{\sigma_3^{(2)}\}$.
This corresponds to the conjugation of $Z_{(\alpha,k)}$ by the first line of Toffoli operators in Eq.~\eqref{eq:app:extended-conjugation}.

Next, for each Toffoli operator $T$ with target $(\alpha,k\m1)$, and for each list $\sigma$ in $\Lambda_2^{(\alpha)}$, take the list obtained by adding the controls of $T$ to $\sigma$ and by removing $(\alpha,k\m1)$.
Since there are three such Toffoli gates and two lists in $\Lambda_2^{(\alpha)}$,
we obtain six three-entry lists
\begin{equation}
\begin{aligned}
\sigma_1^{(3)}&=|(\alpha,k\m2),(A,k\m2),(A,k\m1)|
&
\sigma_2^{(3)}&=|(\alpha,k\m2),(B,k\m2),(A,k\m1)|\\
\sigma_3^{(3)}&=|(\alpha,k\m2),(A,k\m2),(B,k\m1)|
&
\sigma_4^{(3)}&=|(\alpha,k\m2),(B,k\m2),(B,k\m1)|\\
\sigma_5^{(3)}&=|(A,k\m2),(B,k\m2),(A,k\m1)|
&
\sigma_6^{(3)}&=|(A,k\m2),(B,k\m2),(B,k\m1)|.
\end{aligned}
\end{equation}
We put the four lists with an entry in the $\alpha$ register in $\Lambda_3^{(\alpha)}$, and the remaining two in $\Gamma_3^{(\alpha)}$.
This corresponds with the conjugation of $\mathcal{Z}_{\Lambda_2^{(\alpha)}}$.

Given a generic $p=1,\dots,k$, for each Toffoli operator $T$ with target $(\alpha,k-p+1)$, and for each list $\sigma$ in $\Lambda_p^{(\alpha)}$, take the list obtained by removing $(\alpha,k-p+1)$ from $\sigma$ and adding the controls of $T$.
This defines $3n_p$ $(p\p1)$-entry lists, where $3$ is the number of Toffoli operators $T$ in a line, and $n_p$ is the number of lists in $\Lambda_p^{(\alpha)}$.
Among the obtained lists, we put the $2n_p$ ones with an entry in the $\alpha$ register in $\Lambda_{p+1}^{(\alpha)}$, and the remaining $n_p$ ones in $\Gamma_{p+1}^{(\alpha)}$.
We have $n_1=1$, and $n_{p+1}=2n_p=2^p$, while the number of lists in $\Gamma_{p+1}^{(\alpha)}$ is $m_{p+1}^{(\alpha)}=n_p=2^{p-1}$.
The only exception is  $\Lambda_{k+1}^{(\alpha)}$, which is empty due to the form of the last line in Eq.~\eqref{eq:app:extended-conjugation}.

All the lists we have built are contained in the set $\Lambda_{\alpha}\cup\Gamma_{\alpha}$, with $\Lambda_{\alpha}=\Lambda_1^{(\alpha)}\cup\Lambda_2^{(\alpha)}\cup\dots\cup\Lambda_{k+1}^{(\alpha)}$ and $\Gamma_{\alpha}=\Gamma_1^{(\alpha)}\cup\Gamma_2^{(\alpha)}\cup\dots\cup\Gamma_k^{(\alpha)}$.
This allows us to write the result in a compact form,
\begin{equation}
\left(U_k^{(\alpha)}\right)^{\dagger}Z_{(\alpha,k)}U_k^{(\alpha)}=
\prod_{\sigma\in\Lambda_{\alpha}\cup\Gamma_{\alpha}}
Z_{\Sigma_1,\dots,\Sigma_{|\sigma|}}\;,
\end{equation}
where each list $\sigma\in\Lambda_{\alpha}\cup\Gamma_{\alpha}$ contains $|\sigma|$ entries denoted $\Sigma_1,\dots,\Sigma_{|\sigma|}$.
Note that every gate is now diagonal, which means that they all commute with each others and the order does not matter.
As the qubits $(\alpha,l)$, $l=1,\dots,k$ are initially set in the $\ket{0}$ state, we can remove $Z_{(\alpha,k)}$ and all the operators with one control on these qubits.
Thus, on the subspace where the ancillary qubits are in the $\ket{0}$ state, we can write
\begin{equation}
\left(U_k^{(\alpha)}\right)^{\dagger}
Z_{(\alpha,k)}
U_k^{(\alpha)}
=
\prod_{\sigma\in\Lambda_{\alpha}\cup\Gamma_{\alpha}}
Z_{\Sigma_1,\dots,\Sigma_{|\sigma|}}
=
\prod_{\sigma\in\Gamma_{\alpha}}
Z_{\Sigma_1,\dots,\Sigma_{|\sigma|}}.
\label{eq:app:conjugation-simplification}
\end{equation}
The number of operators in the last product is
\begin{equation}
M_{\alpha}=\sum_{p=2}^{k+1}m_p^{(\alpha)}=\sum_{p=2}^{k+1}2^{p-2}=2^k-1.
\end{equation}
The lists in $\Gamma_{p+1}^{(\alpha)}$ with $p>1$ are all the possible lists of the form
\begin{equation}
|(A,k-p),(B,k-p),(X_1,k-p+1),(X_2,k-p+2),\dots,(X_{p-1},k-1)|,
\end{equation}
where $X_1\dots X_{p-1}$ is any possible string made of $A$ and $B$.
This concludes the part involving the $\alpha$ register.

Concerning
\begin{equation}
\left(U_k^{(\beta)}\right)^{\dagger}Z_{(\beta,k)}U_k^{(\beta)},
\end{equation}
we can simply replace $\alpha$ with $\beta$, $A$ with $C$, and $B$ with $D$ everywhere.
The only relevant difference is that, due to the fact that the site $\mathbf{s}$ is involved in the computation of the $\beta$ carry bits, the last line of Eq.~\eqref{eq:app:extended-conjugation} consists of three Toffoli operators instead of one,
\begin{equation}
X_{\mathbf{s},(C,0)}^{(\beta,1)}
X_{\mathbf{s},(D,0)}^{(\beta,1)}
X_{(C,0),(D,0)}^{(\beta,1)}\;.
\end{equation}

As a consequence, $\Gamma_{k+1}^{(\beta)}$ contains now $3\cdot2^{k-1}$ instead of $2^{k-1}$ lists, of the form
\begin{equation}
\begin{gathered}
|(C,0),(D,0),(Y_1,1),(Y_2,2),\dots,(Y_{k-1},k-1)|,\\
|\mathbf{s},(C,0),(Y_1,1),(Y_2,2),\dots,(Y_{k-1},k-1)|,\\
|\mathbf{s},(D,0),(Y_1,1),(Y_2,2),\dots,(Y_{k-1},k-1)|,
\end{gathered}
\end{equation}
where $Y_1\dots Y_{k-1}$ is any possible string made of $C$ and $D$.
The total number of lists in $\Gamma_{\beta}$ is
\begin{equation}
M_{\beta}=\sum_{p=2}^{k+1}m_p^{(\beta)}=\sum_{p=2}^k2^{p-2}+3\cdot2^{k-1}=2^{k+1}-1.
\end{equation}

Putting together Eq.~\eqref{eq:app:generic-generator-with-ancillae}, Eq.~\eqref{eq:app:conjugation-simplification}, and the corresponding result for $\beta$ we find the final form of the $k$\textsuperscript{th} generator,
\begin{equation}
\begin{gathered}
g_{\mathbf{s},k}=
Z_{(A,k)}\otimes
Z_{(B,k)}\otimes
Z_{(C,k)}\otimes
Z_{(D,k)}\otimes
\left(
\prod_{\sigma\in\Gamma_{\alpha}}
Z_{\Sigma_1,\dots,\Sigma_{|\sigma|}}
\right)
\otimes
\left(
\prod_{\sigma\in\Gamma_{\beta}}
Z_{\Sigma_1,\dots,\Sigma_{|\sigma|}}
\right),\\
\Gamma_{\alpha}=
\Gamma_2^{(\alpha)}\cup\dots
\cup\Gamma_{k+1}^{(\alpha)},
\qquad
\Gamma_{\beta}=\Gamma_2^{(\beta)}\cup\dots\cup\Gamma_{k+1}^{(\beta)},
\end{gathered}
\end{equation}
where $\Sigma_1,\dots,\Sigma_{|\sigma|}$ are the $\abs{\sigma}$ entries of the list $\sigma$, and
\begin{itemize}
\item $\Gamma_2^{(\alpha)}$ contains the single list $|(A,k\m1),(B,k\m1)|$;
\item $\Gamma_p^{(\alpha)}$, $p=3,\dots,k+1$, contains all the possible $p$-entry lists of indexes
\begin{equation}
|(A,k-p+1),(B,k-p+1),(X_1,k-p+2),(X_2,k-p+3),\dots,(X_{p-2},k-1)|,
\end{equation}
with $X_1\dots X_{p-2}$ an arbitrary string of $A$ and $B$;
\item $\Gamma_2^{(\beta)}$ contains the single list $|(C,k\m1),(D,k\m1)|$;
\item $\Gamma_p^{(\beta)}$, $p=3,\dots,k$, contains all the possible $p$-entry lists of indexes
\begin{equation}
|(C,k-p+1),(D,k-p+1),(Y_1,k-p+2),(Y_2,k-p+3),\dots,(Y_{p-2},k-1)|,
\end{equation}
with $Y_1\dots Y_{p-2}$ an arbitrary string of $C$ and $D$;
\item $\Gamma_{k+1}^{(\beta)}$ contains all the possible $(k\p1)$-entry lists of indexes
\begin{equation}
\begin{gathered}
|(C,0),(D,0),(Y_1,1),(Y_2,2),\dots,(Y_{k-1},k-1)|,\\
|\mathbf{s},(C,0),(Y_1,1),(Y_2,2),\dots,(Y_{k-1},k-1)|,\\
|\mathbf{s},(D,0),(Y_1,1),(Y_2,2),\dots,(Y_{k-1},k-1)|,
\end{gathered}
\end{equation}
with $Y_1\dots Y_{p-2}$ an arbitrary string of $C$ and $D$.
\end{itemize}
All the entries in the lists denote qubits on the lattice up to level $k\m1$.
Defining
\begin{equation}
\mathcal{C}_{\mathbf{s},k}=
\left(
\prod_{\sigma\in\Gamma_{\alpha}}
Z_{\Sigma_1,\dots,\Sigma_{|\sigma|}}
\right)\otimes
\left(
\prod_{\sigma\in\Gamma_{\beta}}
Z_{\Sigma_1,\dots,\Sigma_{|\sigma|}}
\right),
\end{equation}
we can write the generators in the form used in the main text
\begin{equation}
g_{\mathbf{s},k}=
Z_{(A,k)}\otimes
Z_{(B,k)}\otimes
Z_{(C,k)}\otimes
Z_{(D,k)}\otimes
\mathcal{C}_{\mathbf{s},k}.
\end{equation}

For a site $\mathbf{s}$ with odd $\abs{\mathbf{s}}$, we have to modify the result in two ways looking at Eq.~\eqref{eq:2d:even-conservation-law} and Eq.~\eqref{eq:2d:odd-conservation-law}.
First, we need to replace $A$ with $C$, $B$ with $D$ and vice versa.
Furthermore, since $f_{\mathbf{s}}$ is substituted by $1-f_{\mathbf{s}}$, we need to replace the controls on the site $\mathbf{s}$ with anticontrols.
Notice however that the association between $A$, $B$, $C$ and $D$ and east, north, west and south remains unchanged.

\bibliography{Biblio_QEC}

\end{document}